\newcommand{\NP}{{\sf NP}}
\newcommand{\FPT}{{\sf FPT}}
\newcommand{\dist}{{\rm dist}}
\newcommand{\diam}{{\rm diam}}
\newcommand{\tw}{{\mathbf{tw}}}
\newcommand{\ad}{{\rm ad}}
\newcommand{\mad}{{\rm mad}}
\begin{document}
\title{ A Linear Kernel for Finding\\ Square Roots of Almost Planar Graphs
\thanks{This paper received support from  EPSRC (EP/G043434/1),  ANR project GraphEn 
(ANR-15-CE40-0009), the European Research Council under the European Union's Seventh Framework Programme (FP/2007-2013) / ERC Grant Agreement n. 267959 and the Research Council of Norway (the project CLASSIS).
An extended abstract of it appeared in the proceedings of SWAT 2016~\cite{GKPS16}.}}
\author{Petr A. Golovach\inst{1} \and Dieter Kratsch\inst{2} \and Dani{\"e}l Paulusma\inst{3} \and Anthony Stewart\inst{3}}
\institute{
Department of Informatics, 
University of Bergen, PB 7803, 5020 Bergen, Norway,\\
\texttt{petr.golovach@ii.uib.no}
\and
Laboratoire d'Informatique Th\'eorique et Appliqu\'ee, 
Universit\'e de Lorraine, 57045 Metz Cedex 01, France, \texttt{dieter.kratsch@univ-lorraine.fr}
\and
School of Engineering and  Computing Sciences, Durham University,\\ Durham DH1 3LE, UK,
\texttt{\{daniel.paulusma,a.g.stewart\}@durham.ac.uk}
}
\maketitle

\begin{abstract}
A graph $H$ is a square root of a graph $G$ if $G$ can be obtained from $H$ by the addition of edges between any two vertices in $H$ that are of distance~2 from each other.
The {\sc Square Root} problem is that of deciding whether a given graph admits a square root. We consider this problem for planar graphs in the context of 
the ``distance from triviality'' framework.
For an integer~$k$, a planar$+kv$ graph  
(or $k$-apex graph)
is a graph that can be made planar by the removal of at most~$k$ vertices.
We prove that a generalization of {\sc Square Root}, in which some edges are prescribed to be either in or out of any solution, has a
 kernel of size $O(k)$ for planar$+kv$ graphs, when parameterized by $k$.
Our result is based on a new edge reduction rule which, as we shall also show, has  a wider applicability for the {\sc Square Root} problem.
\end{abstract}

\section{Introduction}\label{s-intro}

Squares and square roots are well-known concepts in graph theory with a long history.
The {\it square} $G=H^2$ of a
graph $H=(V_H,E_H)$ is the graph with vertex set $V_G=V_H$, such that any two distinct vertices 
$u,v\in V_H$ are adjacent in $G$ if and only if $u$ and $v$ are of distance at most 2 in $H$.
A graph $H$ is a {\it square root} of $G$ if $G=H^2$. It is easy to check that there exist graphs with no square root,
graphs with a unique square root as well as graphs with many square roots. The corresponding recognition problem, which asks whether a given graph admits 
a square root, is called the {\sc Square Root} problem. Motwani and Sudan~\cite{MotwaniS94} showed that {\sc  Square Root} is \NP-complete.

\subsection{Existing Results}\label{s-existing}

In 1967, Mukhopadhyay~\cite{Mukhopadhyay67} characterized the graphs that have a square root. 
In line with the aforementioned \NP-completeness result of Motwani and Sudan, which appeared in 1994, this characterization does not lead to a polynomial-time
algorithm for {\sc Square Root}. Later results focussed on the following  
two recognition questions (${\cal G}$ denotes some fixed graph class):

\begin{itemize}
\item[(1)] How hard is it to recognize squares of graphs of~$\cal G$?
\item[(2)] How hard is it to recognize  graphs of $\cal G$ that have a square root?
\end{itemize}

\noindent
Note that the second question corresponds to the {\sc Square Root} problem restricted to graphs in ${\cal G}$, whereas the first question is the same as asking whether a given graph has a square root in ${\cal G}$.

Ross and Harary~\cite{RossH60} characterized squares of a tree and
proved that if a connected graph has a 
tree square root, then this root is unique up to isomorphism.
Lin and  Skiena~\cite{LinS95} gave a linear-time algorithm for recognizing squares of trees; they also proved that
{\sc Square Root} can be solved in linear time for planar graphs.
Le and Tuy~\cite{LeT10} generalized the above results for trees~\cite{LinS95,RossH60} to block graphs, 
whereas we recently gave a polynomial-time algorithm for recognizing squares of cactuses~\cite{GKPS16b}.
Nestoridis and Thilikos~\cite{NT14} proved that {\sc Square Root} is not only polynomial-time solvable for the class of planar graphs but for any non-trivial minor-closed graph class, that is, 
for any graph class that does not contain all graphs and that is closed under taking vertex deletions, edge deletions and edge contractions.

Lau~\cite{Lau06} gave a polynomial-time algorithm for recognizing squares of bipartite graphs; 
note that {\sc Square Root} is trivial for bipartite graphs, and even for $K_4$-free graphs, or equivalently, graphs of clique number at most~3, as
square roots of $K_4$-free graphs must have maximum degree at most~2.
Milanic, Oversberg and Schaudt~\cite{MOS14} proved that line graphs can only have bipartite graphs as a square root. The same authors also gave 
a linear-time algorithm for {\sc Square Root} restricted to line graphs.

Lau and Corneil~\cite{LauC04} gave a polynomial-time algorithm for 
recognizing squares of proper interval graphs and showed that the problems of recognizing squares of chordal graphs and squares of split graphs are both \NP-complete.
The same authors also proved that {\sc Square Root} is \NP-complete even for chordal graphs.
Le and Tuy~\cite{LeT11} gave a quadratic-time algorithm for recognizing squares of strongly chordal split graphs. 
 Le, Oversberg and Schaudt~\cite{LOS15} gave polynomial algorithms for recognizing squares of ptolemaic graphs and 3-sun-free split graphs.
In a more recent paper~\cite{LOS}, the same authors extended the latter result by giving polynomial-time results for recognizing squares of a number of other subclasses of split graphs.  
  Milanic and Schaudt~\cite{MilanicS13} proved that {\sc Square Root} can be solved in linear time for trivially perfect graphs and threshold graphs.
They posed the complexity of {\sc Square Root} restricted to split graphs and cographs as open problems.
Recently, we proved that {\sc Square Root} is linear-time solvable for 3-degenerate graphs and for $(K_r,P_t)$-free graphs for any two positive integers $r$ and $t$~\cite{GKPS}. 

Adamaszek and Adamaszek ~\cite{AdamaszekA11} proved that if a graph has a square root of girth at least 6, then this square root is unique up to isomorphism. Farzad, Lau, Le and Tuy~\cite{FarzadLLT12} showed that recognizing graphs with a square root of girth at least $g$ is polynomial-time solvable if $g\geq 6$ and \NP-complete if $g=4$. The missing case~$g=5$ was shown
to be \NP-complete by Farzad and Karimi~\cite{FarzadK12}.

In a previous paper~\cite{CochefertCGKP13} we proved that {\sc Square Root} is polynomial-time solvable for graphs of maximum degree~6.
We also considered square roots under the framework of parameterized complexity~\cite{CochefertCGKP13,CCGKP}.
We proved that the following two problems are fixed-parameter tractable with parameter $k$:
testing whether a connected $n$-vertex graph with $m$ edges has a square root with at most $n-1+k$ edges and testing
whether such a graph has a square root with at least $m-k$ edges. In particular, the first result implies that 
the problem of recognizing
squares of tree$+ke$ graphs, that is, 
graphs that can be modified into trees by removing at most $k$ edges,
is fixed-parameter tractable when parameterized by~$k$.

\subsection{Our Results}\label{s-our}

We are interested in developing techniques that  lead to new polynomial-time or parameterized algorithms for {\sc Square Root} for special graph classes. In particular, there are currently very few results on the parameterized complexity of {\sc Square Root}, and this is the main focus of our paper.

The graph classes that we consider fall under the ``distance from triviality'' framework, introduced by Guo, H\"uffner and Niedermeier~\cite{GHN04}.
For a graph class ${\cal G}$ and an 
integer~$k$ we define four classes of ``almost ${\cal G}$'' graphs, that is, graphs that are editing distance~$k$ apart from~${\cal G}$. To be more precise, the classes ${\cal G}+ke$, ${\cal G}-ke$, ${\cal G}+kv$ and ${\cal G}-kv$ consist of all
graphs that can be modified into a graph of ${\cal G}$ by deleting at most $k$ edges, adding at most $k$ edges, deleting at most $k$ vertices and adding at most $k$ vertices, respectively. Taking $k$ as the natural parameter, these graph classes have been well studied from a parameterized point of view for a number of problems. In particular this is true for the vertex coloring problem
restricted to (subclasses of) almost perfect graphs (due to the result of Gr\"otschel, Lov\'asz, and Schrijver~\cite{GLS84}, who proved that vertex coloring is polynomial-time solvable on perfect graphs). 

We consider ${\cal G}$ to be the class of {\it planar graphs}. 
As planar graphs are closed under taking edge and vertex deletions, the classes of planar$-kv$ graphs and planar$-ke$ graphs coincide with planar graphs.
Hence, we only need to consider planar$+kv$ graphs and planar$+ke$ graphs, that is, graphs that can be made planar by at most~$k$ vertex deletions or at most~$k$ edge deletions, respectively. 
We note that planar$+kv$ graphs are also known as {\it $k$-apex graphs}.
Moreover, we observe that {\sc Square Root} is \NP-complete for planar$+kv$ graphs and planar$+ke$ graphs when $k$ is part of the input, as the classes of planar$+nv$ graphs and planar$+n^2e$ graphs coincide with the class of all graphs on $n$ vertices.

Our main contribution is showing  
that {\sc Square Root} is \FPT{} on $k$-apex graphs when parameterized by $k$. More precisely, we prove that a more general version of the problem admits a linear kernel. The \textsc{Square Root with Labels} problem takes as input a graph $G$ with two subsets $R$ and $B$ of prespecified edges: the edges of $R$ need to be included in a solution (square root) and the edges of $B$
 are forbidden in the solution.
We prove that {\sc Square Root with Labels} has a kernel of size $O(k)$ for planar$+kv$ graphs, when parameterized by~$k$. 
As every $planar+ke$ graph is $planar+kv$, we immediately obtain
 the same result for planar$+ke$ graphs.
The {\sc Square Root with Labels} problem was
introduced 
in a previous paper~\cite{CochefertCGKP13}, but in this paper we introduce a new reduction rule, which we call the {\it edge reduction rule}.

The edge reduction rule is used to recognize, in polynomial time, a certain local substructure that graphs with square roots must have. As such, our rule can be added to the list of known and similar polynomial-time reduction rules for recognizing square roots.
To give a few examples, the reduction rule of Lin and Skiena~\cite{LinS95} is based on recognizing pendant edges and bridges of square roots of planar graphs,
whereas the reduction rule of Farzad, Le and Tuy~\cite{FarzadLLT12} is based on the fact that squares of graphs with large girth can be recognized to have a unique root.
In contrast, our edge reduction rule, which is based on detecting so-called recognizable edges whose neighbourhoods have some special property
(see Section~\ref{s-edge} for a formal description) is tailored for graphs with no unique square root, just as we did  in~\cite{CCGKP}; in fact our new rule, which we explain in detail in Section~\ref{s-reduce}, can be seen as an improved and more powerful variant of the rule used in~\cite{CCGKP}.
For squares with no unique square root, not all the root edges can be recognized in polynomial time. Hence, removing certain local substructures, thereby reducing the graph to a smaller graph, and keeping track of the compulsory edges (the recognized edges) and forbidden edges is the best we can do.  
However, after the reduction, the 
connected components of the remaining graph might be dealt with further by exploiting the properties of the graph class under consideration.
This is exactly what we do for planar$+kv$ graphs to obtain the linear kernel in Section~\ref{s-kernel}.

The fact that our edge reduction rule is more general than the other known rules is also evidenced by other applications of it. 
In~\cite{CCGKPS} we showed that it can be used to obtain an alternative proof of the known
result~\cite{CochefertCGKP13} that {\sc Square Root} is polynomial-time solvable for graphs of maximum degree at most~6.\footnote{The proof in~\cite{CochefertCGKP13} is based on a different and less general reduction rule, which only ensures boundedness of treewidth, while the edge reduction rule yields graphs of maximum degree at most~6 with a bounded number of vertices.} 
As a third application of our edge reduction rule we show in Section~\ref{sec:mad} that it can be used to solve {\sc Square Root} in
polynomial-time solvable for graphs of maximum average degree smaller than $\frac{46}{11}$. 

In Section~\ref{s-con} we give some directions for future work.

\section{Preliminaries}\label{sec:defs}
We only consider finite undirected graphs without loops or multiple edges. 
We refer to the textbook by Diestel~\cite{Diestel10} for any undefined graph terminology.

We denote the vertex set of a graph $G$ by $V_G$ and the edge set by $E_G$. The subgraph of $G$
induced by a subset $U\subseteq V_G$ is denoted by $G[U]$. 
The graph $G-U$ is the graph obtained from $G$ after removing the vertices of $U$. If $U=\{u\}$, we also write $G-u$. 
Similarly, we denote the graph obtained from $G$ after deleting an edge $e$ by $G-e$.
A vertex $u$ is a \emph{cut vertex} of a connected graph $G$ with at least two vertices if $G-u$ is disconnected.
An inclusion-maximal subgraph of $G$ that has no cut vertices is called a \emph{block}. 
A {\it bridge} of a connected graph $G$ is an edge $e$ such that $G-e$ is disconnected.

In the remainder of this section let $G$ be a graph.
We say that $G$ is planar$+kv$ if $G$ can be made planar 
by removing at most~$k$ vertices.
The \emph{distance} $\dist_G(u,v)$ between a pair of vertices $u$ and $v$ of~$G$ is the number of edges of a shortest path between them. 
The diameter $\diam(G)$ of~$G$ is the maximum distance between any two vertices of $G$. 
The distance between a vertex  $u\in V_G$ and a subset $X\subseteq V_G$ is denoted by $\dist_G(u,X)=\min\{\dist_G(u,v)\mid v\in X\}$.
The distance between two subsets $X$ and $Y$ of $V_G$ is denoted by $\dist_G(X,Y)=\min\{\dist_G(u,v)\mid u\in X,v\in Y\}$.
Whenever we speak about the distance between a vertex set $X$ and a subgraph $H$ of $G$, we mean the distance between $X$ and $V_H$.

The \emph{open neighbourhood} of a vertex $u\in V_G$ is defined as $N_G(u) = \{v\; |\; uv\in E_G\}$ and
its \emph{closed neighbourhood} is defined as $N_G[u] = N_G(u) \cup \{u\}$. 
For $X\subseteq V_G$, let $N_G(X)=\bigcup_{u\in X}N_G(u)\setminus X$.
Two (adjacent) vertices $u,v$ are said to be \emph{true twins} if $N_G[u]=N_G[v]$.
The degree of a vertex
$u\in V_G$ is defined as $d_G(u)=|N_G(u)|$.
The maximum degree of $G$ is $\Delta(G)=\max\{d_G(v)\; |\; v\in V_G\}$.
A vertex of degree~1 is said to be a \emph{pendant} vertex. If $v$ is a pendant vertex, then we say the unique edge incident to $u$ is a \emph{pendant} edge.

The framework of parameterized complexity allows us to study the computational complexity of a discrete optimization problem in two dimensions.
One dimension is the input size~$n$ and the other one is a parameter $k$. We refer to the recent textbook of Cygan et al.~\cite{CyganFKLMPPS15} for further details and only give the definitions for those notions relevant for our paper here.
A parameterized problem is \emph{fixed parameter tractable} (\FPT) if it can be solved in time $f(k)\cdot n^{O(1)}$ for some computable function $f$. 
A \emph{kernelization} of a parameterized problem~$\Pi$ is a polynomial-time algorithm that maps each instance $(x,k)$ with input $x$ and parameter $k$ to an instance $(x',k')$, such that i) $(x,k)$ is a yes-instance if and only if $(x',k')$ is a yes-instance of $\Pi$, and ii) $|x'|+k'$ is bounded by $f(k)$ for some computable function $f$. 
The output $(x',k')$ is called a \emph{kernel} for $\Pi$. 
The function~$f$ is said to be a \emph{size} of the kernel.
It is well known that a decidable parameterized problem is \FPT{} if and only if it has a kernel. 
A logical next step is then to try to reduce the size of the kernel.
We say that $(x',k')$ is a \emph{linear} kernel if $f$ is linear.

\section{Recognizable Edges}\label{s-edge}

In this section we introduce the definition of a recognizable edge, which plays a crucial role in our paper, together with the corresponding notion of a $(u,v)$-partition. We also prove some important lemmas about this type of edges. See Fig.~\ref{fig:recog} (i) for an example of a recognizable edge and a corresponding $(u,v)$-partition $(X,Y)$.

\begin{definition}\label{d-rec}
An edge $uv$ of a graph $G$ is said to be \emph{recognizable} if the following four conditions are satisfied:
\begin{itemize}
\item[a)] $N_G(u)\cap N_G(v)$ has a partition $(X,Y)$ where 
$X=\{x_1,\ldots,x_p\}$ and $Y=\{y_1,\ldots,y_q\}$, $p,q\geq 1$, are 
(disjoint) cliques in $G$;
\item[b)] $x_iy_j\notin E_G$ for $i\in\{1,\ldots,p\}$ and $j\in\{1,\ldots,q\}$;
\item[c)] for any $w\in N_G(u)\setminus N_G[v]$, $wy_j\notin E_G$ for  $j\in\{1,\ldots,q\}$, and symmetrically,  for any $w\in N_G(v)\setminus N_G[u]$, $wx_i\notin E_G$ for  $i\in\{1,\ldots,p\}$;
\item[d)] for any $w\in N_G(u)\setminus N_G[v]$, there is an $i\in\{1,\ldots,p\}$ such that 
$wx_i\in E_G$, and symmetrically,  for any $w\in N_G(v)\setminus N_G[u]$, there is a $j\in\{1,\ldots,q\}$
such that $wy_j\in E_G$.
\end{itemize}
We also call such a partition $(X,Y)$ a \emph{$(u,v)$-partition} of $N_G(u)\cap N_G(v)$. 
\end{definition}

\noindent
Notice that due to c) and d), $(X,Y)$ is an ordered pair defined for an ordered pair $(u,v)$; if $N_G(u)\setminus N_G(v)\neq\emptyset$ or 
$N_G(v)\setminus N_G(u)\neq\emptyset$ 
then  $(Y,X)$ is not a \emph{$(u,v)$-partition},
as condition c) is violated 
(and in some instances, condition d) as well).

\begin{figure}[ht]
\centering\scalebox{0.9}{\input{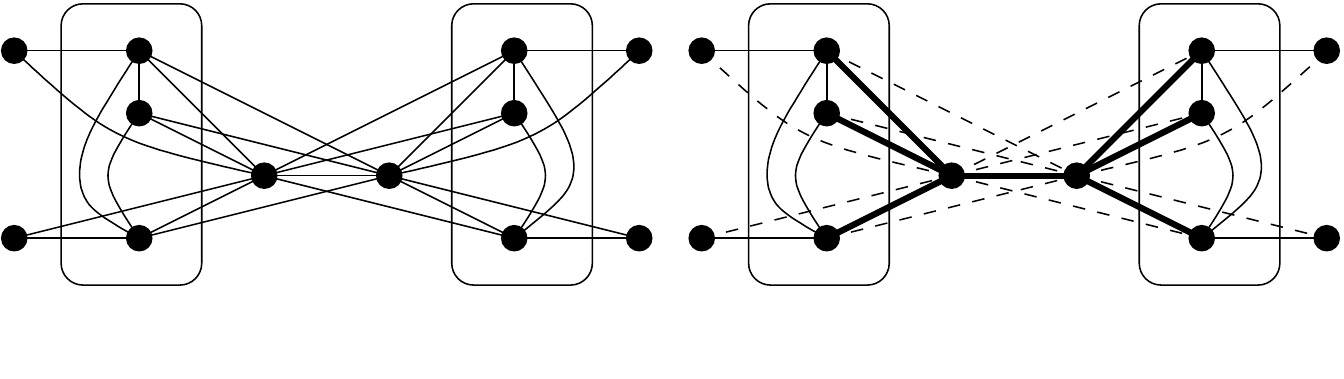_t}}
\caption{(i) An example of a graph $G$ with a recognizable edge
$uv$ and a corresponding $(u,v)$-partition $(X,Y)$. 
(ii) A square root of $G$. In this figure,
the edges of the square root are shown by thick lines and the edges of $G$ not belonging to the square root  
are shown by dashed lines. 
Edges which may or may not belong to the square root are shown by neither thick nor dashed lines.
\label{fig:recog}}
\end{figure}

In the next lemma we give a necessary condition of an edge of a square root $H$ of a graph~$G$ to be recognizable in $G$.
In particular, this lemma implies that any non-pendant bridge of $H$ is a recognizable edge of $G$.

\begin{lemma}\label{lem:edge-one}
Let $H$ be a square root of  a graph $G$. Let $uv$ be an edge of $H$ that is not pendant and such that
any cycle in $H$ containing $uv$ has length at least~$7$. 
Then $uv$ is a recognizable edge of $G$ 
and $(N_H(u)\setminus\{v\},N_H(v)\setminus\{u\})$ is a $(u,v)$-partition in $G$.
\end{lemma}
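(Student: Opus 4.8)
The plan is to show directly that the specific pair $(X,Y):=(N_H(u)\setminus\{v\},\,N_H(v)\setminus\{u\})$ is a $(u,v)$-partition of $N_G(u)\cap N_G(v)$; since by Definition~\ref{d-rec} an edge is recognizable exactly when such a partition exists, this simultaneously yields both assertions of the lemma. Throughout I will exploit the single governing principle that, because $G=H^2$, two vertices $a,b$ are adjacent in $G$ precisely when $\dist_H(a,b)\le 2$, and that whenever two short paths of $H$ meet and close up through the edge $uv$ we obtain a short cycle of $H$ through $uv$ --- which the hypothesis forbids. Since $uv$ is not pendant we have $d_H(u),d_H(v)\ge 2$, so $p=|X|\ge1$ and $q=|Y|\ge1$, as required.

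First I would establish that $(X,Y)$ is indeed a partition of $N_G(u)\cap N_G(v)$ and verify conditions a) and b). The sets are disjoint, for a common element would be $H$-adjacent to both $u$ and $v$ and hence give a triangle through $uv$. The inclusion $X\cup Y\subseteq N_G(u)\cap N_G(v)$ is immediate from the triangle inequality, and each of $X,Y$ is a clique in $G$ because any two neighbours of $u$ (resp.\ $v$) are at $H$-distance at most~$2$; this gives a). For the reverse inclusion, a vertex $w\in N_G(u)\cap N_G(v)$ at $H$-distance~$1$ from $u$ or from $v$ lands in $X$ or $Y$, while if $\dist_H(w,u)=\dist_H(w,v)=2$ then picking midpoints $a\in X$, $b\in Y$ of the two geodesics produces the $5$-cycle $u\,a\,w\,b\,v$ through $uv$, a contradiction. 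The same style of argument gives b): an edge $x_iy_j$ of $G$ means $\dist_H(x_i,y_j)\le 2$, and stitching a geodesic between them to the edges $ux_i$, $y_jv$, $vu$ yields a cycle of length~$4$ or~$5$ through $uv$.

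It remains to prove c) and d), and the key preliminary observation is that every $w\in N_G(u)\setminus N_G[v]$ satisfies $\dist_H(w,u)=2$ and $\dist_H(w,v)\ge3$ --- distance~$1$ from $u$ would place $w$ in $X\subseteq N_G(v)$, contradicting $w\notin N_G[v]$. Condition d) is then immediate: the midpoint of a shortest $u$--$w$ path lies in $X$ and is $H$-adjacent, hence $G$-adjacent, to $w$. Condition c) is the crux and the place I expect the real work. Suppose some $y_j\in Y$ had $\dist_H(w,y_j)\le2$; distance~$1$ already forces $\dist_H(w,v)\le2$, contradicting the above, so $\dist_H(w,y_j)=2$, and I would stitch the $u$--$w$ geodesic (midpoint $a\in X$), the $w$--$y_j$ geodesic (midpoint $c$), and the edges $y_jv$, $vu$ into the closed walk $u\,a\,w\,c\,y_j\,v$ of length~$6$ through $uv$. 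The main obstacle is not finding this walk but certifying that it is a genuine cycle: I must rule out every coincidence among its vertices using the distance constraints (for instance $c\in\{u,v\}$ or $w=y_j$ are all excluded, as each would contradict $\dist_H(w,u)=2$ or $\dist_H(w,v)\ge3$), and handle the one genuine degeneracy $c=a$, which makes $a$ adjacent to $u,w,y_j$ and collapses the picture to the even shorter $4$-cycle $u\,a\,y_j\,v$. In every case a cycle of length at most~$6$ through $uv$ appears, contradicting the hypothesis; hence $wy_j\notin E_G$, the symmetric statements with the roles of $u$ and $v$ exchanged follow identically, and the proof is complete.
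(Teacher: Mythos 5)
Your proposal is correct and follows essentially the same route as the paper's own proof: both verify conditions a)--d) of Definition~\ref{d-rec} directly for the pair $(N_H(u)\setminus\{v\},N_H(v)\setminus\{u\})$, and the short cycles you stitch together (the $5$-cycle in a), the $4$/$5$-cycles in b), and the $6$-cycle with its degenerate $4$-cycle case $c=a$ in c)) are exactly the cycles $wuvw'$, $zx_iuvy_jz$, $uvy_jzwxu$ and $uvy_jxu$ constructed in the paper. The only cosmetic differences are your distance-based bookkeeping (e.g.\ ruling out $wy_j\in E_H$ via $\dist_H(w,v)\ge 3$ where the paper instead exhibits a $5$-cycle) and proving d) before c), neither of which changes the argument.
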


\begin{proof}
Let $H$ be a square root of a graph $G$ and let $uv$ be an edge of $H$ such that 
$uv$ is not a pendant edge of $H$ and any cycle in 
$H$ containing $uv$ has length at least $7$. Let $X=\{x_1,\ldots,x_p\}=N_H(u)\setminus\{v\}$ and $Y=\{y_1,\ldots,y_q\}=N_H(v)\setminus\{u\}$.
Because $uv$ is not a pendant edge and any cycle in $H$ that contains $uv$ has length at least 7, it follows that $X\neq\emptyset$, $Y\neq\emptyset$ and $X\cap Y=\emptyset$.
We show that $(X,Y)$ is a $(u,v)$-partition of $N_G(u)\cap N_G(v)$  in $G$ by proving that conditions a)--d) of 
Definition~\ref{d-rec}
are fulfilled.

First we prove a). Let $z\in N_G(u)\cap N_G(v)$. We will show that $z\in X\cup Y$. 
If $uz\in E_H$ then $z\in X$, and if $vz\in E_H$ then $z\in Y$.
Suppose that $z\notin X$ and $z\notin Y$.
Since $uz\in E_G$, there is a vertex~$w\in V_G$ such that $uw,wz\in E_H$.
Since $vz\notin E_H$ it follows that $w\neq v$.
It follows due to symmetry that there exists $w'\in V_G$ such that $vw',w'z\in E_H$ and $w'\neq u$.
Then either $wuvw'$ is a cycle in $H$ if $w=w'$, otherwise, $zwuvw'z$ is a cycle of $H$.
In both cases we have a contradiction since any cycle in 
$H$ containing $uv$ has length at least $7$. This proves that $z\in X\cup Y$ and therefore, $N_G(u)\cap N_G(v)\subseteq X\cup Y$. 
Since $vx_i\in E_G$ and $uy_j\in E_G$ for all $i\in \{1,\ldots,p\}$ and $j\in\{1,\ldots,q\}$, we see that $X\cup Y\subseteq N_G(u)\cap N_G(v)$.
Because $X,Y\neq\emptyset$ and $X\cap Y=\emptyset$, $(X,Y)$ is a partition of $N_G(u)\cup N_G(v)$.
It remains to observe that $X$ and $Y$ are cliques in $G$ because any two vertices of~$X$ and any two vertices of $Y$ have $u$ or $v$, respectively, as common neighbour in $H$.  

To prove b), assume that there are $i\in\{1,\ldots,p\}$ and $j\in\{1,\ldots,q\}$ such that $x_iy_j\in E_G$. Because $H$ has no cycle of length 4 containing $uv$,
 $x_iy_j\notin E_H$.
Hence, there is $z\in V_H$ such that $x_iz,zy_j\in E_H$. Because $H$ has no cycles of length~3 containing $uv$, we find that $z\notin\{u,v\}$. We conclude that $zx_iuvy_jz$ is a cycle of length~5 in~$H$ that contains $uv$; a contradiction.

To prove c), it suffices to show that for any $w\in N_G(u)\setminus N_G[v]$, $wy_j\notin E_G$ for  $j\in\{1,\ldots,q\}$, as the second part is symmetric.
To obtain a contradiction, assume that there are vertices $w\in N_G(u)\setminus N_G[v]$ and $y_j$ for some $j\in\{1,\ldots,q\}$ such that $wy_j\in E_G$.
By a), $(X,Y)$ is a partition of $N_G(u)\cap N_G(v)$. Hence, $w\notin X$ and $w\notin Y$. Because $w\notin X$ and $w\in N_G(u)$, there is $x\in V_G$ such that $ux,xw\in E_H$. As $ux\in E_H$, we have $x\in X$.  If $wy_j\in E_H$, then the cycle $uxwy_jvu$ containing $uv$ has  length~5; a contradiction.
Hence, $wy_j\notin E_H$. Because $wy_j\in E_G$, there is a vertex $z\in V_H$ such that $wz,zy_j\in E_H$. Since $w\in N_G(u)\setminus N_G[v]$, we have  $w\notin\{u,v\}$. If $x=z$, then $uvy_jxu$ is a cycle of length~4 containing  $uv$, a contradiction. If $x\neq z$, then $uvy_jzwxu$ is a cycle of length~6 containing $uv$, another contradiction.

To prove d) we consider some $w \in N_G(u)\setminus N_G[v]$. We note that since $X\subseteq N_G(u)\cap N_G(v)$, $w\notin X$ and thus $uw\notin E_H$.
Since $uw \in  E_G$ by definition, there must be some $x\in V_G$ such that $ux,xw \in E_H$.
Because $w$ is not adjacent to $v$, we find that $x\neq v$. 
Since $ux \in E_H$ and $X=N_H(u)\setminus\{v\}$, this means that $x\in X$.
The second condition in d) follows by symmetry.\qed
\end{proof}

The following corollary follows immediately from Lemma~\ref{lem:edge-one}.

\begin{corollary}\label{lem:edge-one2}
Let $H$ be a square root of a graph with no recognizable edges. Then every non-pendant edge of $H$ lies on a cycle of length at
most~$6$.
\end{corollary}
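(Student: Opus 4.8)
The plan is to derive the corollary directly from the contrapositive of Lemma~\ref{lem:edge-one}. Let $G$ be a graph with no recognizable edges and let $H$ be a square root of $G$. I would fix an arbitrary non-pendant edge $uv$ of $H$, and aim to show that $uv$ lies on a cycle of $H$ of length at most~$6$; since $uv$ is arbitrary, this establishes the claim for all non-pendant edges.

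First I would restate Lemma~\ref{lem:edge-one} in contrapositive form: if an edge $uv$ of $H$ is \emph{not} a recognizable edge of $G$, then either $uv$ is pendant, or some cycle of $H$ containing $uv$ has length at most~$6$ (equivalently, it is not the case that every such cycle has length at least~$7$). Since by hypothesis $G$ has no recognizable edges whatsoever, the edge $uv$ is in particular not recognizable; combined with the assumption that $uv$ is not pendant, the only surviving alternative is that $uv$ lies on a cycle of $H$ of length at most~$6$. This is precisely the conclusion we want.

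The single subtle point — and the only place the argument could go wrong — is the case in which $uv$ lies on no cycle at all, i.e.\ where $uv$ is a non-pendant \emph{bridge} of $H$. Here the hypothesis of Lemma~\ref{lem:edge-one} that ``any cycle in $H$ containing $uv$ has length at least~$7$'' is satisfied \emph{vacuously}, so the lemma would certify $uv$ as recognizable in $G$, contradicting the assumption that $G$ has no recognizable edges. Hence no non-pendant edge of $H$ can be a bridge, so $uv$ must lie on at least one cycle, and by the preceding paragraph the shortest cycle through it has length at most~$6$. I expect no genuine obstacle beyond flagging this vacuous-truth case explicitly, which is exactly why the result follows immediately from the lemma.
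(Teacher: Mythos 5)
Your proof is correct and takes exactly the approach the paper intends: the paper gives no argument beyond stating that the corollary ``follows immediately from Lemma~\ref{lem:edge-one}'', which is precisely the contraposition you spell out. Your explicit handling of the vacuous case (a non-pendant bridge satisfying the cycle hypothesis vacuously, hence forced to be recognizable) is a careful and accurate elaboration of that same argument, not a different route.
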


In Lemma~\ref{lem:edge-two} we show that recognizable edges in a graph $G$ can be used to identify some edges of a square root of $G$ and also some edges that are not included in any square root of $G$;
see Fig.~\ref{fig:recog} (ii) for an illustration of this lemma.

\begin{lemma}\label{lem:edge-two}
Let $G$ be a graph with a square root $H$. Additionally let $uv$ be a recognizable edge of $G$ with a 
$(u,v)$-partition $(X,Y)$ where $X=\{x_1,\ldots,x_p\}$ and $Y=\{y_1,\ldots,y_q\}$.
Then: 
\begin{itemize}
\item[i)] $uv\in E_H$;
\item[ii)] for every $w\in N_G(u)\setminus N_G[v]$, $wu \notin E_H$, and
for every $w\in N_G(v)\setminus N_G[u]$, $wv \notin E_H$.
\item[iii)] if $u,v$ are true twins in $G$, then either 
$ux_1,\ldots,ux_p\in E_H$,  $vy_1,\ldots,vy_q\in E_H$ and 
$uy_1,\ldots,uy_q\notin E_H$,  $vx_1,\ldots,vx_p\notin E_H$ or
$ux_1,\ldots,ux_p\notin E_H$,  $vy_1,\ldots,vy_q\notin E_H$ and 
$uy_1,\ldots,uy_q\in E_H$,  $vx_1,\ldots,vx_p\in E_H$;
\item[iv)] if $u,v$ are not true twins in $G$, then
$ux_1,\ldots,ux_p\in E_H$,  $vy_1,\ldots,vy_q\in E_H$ and 
$uy_1,\ldots,uy_q\notin E_H$,  $vx_1,\ldots,vx_p\notin E_H$.
\end{itemize}
\end{lemma}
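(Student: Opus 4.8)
The plan is to work throughout with the equivalence $xy\in E_G$ if and only if $1\le\dist_H(x,y)\le 2$, and to read each part of the lemma as a statement about where the vertices of $X\cup Y$ attach to $uv$ inside $H$. A preliminary remark I would make is that the notion of a $(u,v)$-partition is invariant under simultaneously swapping $u\leftrightarrow v$ and $X\leftrightarrow Y$: conditions a),b) are symmetric and the two clauses of c),d) are merely interchanged. This lets me assume without loss of generality, at certain points, that a given vertex lies in $X$ rather than $Y$. For part i) I would argue by contradiction: if $uv\notin E_H$ then, as $uv\in E_G$, there is a common $H$-neighbour $z$ of $u,v$, and since $H$-edges are $G$-edges we get $z\in N_G(u)\cap N_G(v)=X\cup Y$, so by the symmetry I may take $z\in X$. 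For every $y\in Y$, condition b) gives $zy\notin E_G$, hence $\dist_H(z,y)\ge 3$; combining this with $zu\in E_H$ and $\dist_H(u,y)\le 2$ forces $\dist_H(u,y)=2$, so no vertex of $Y$ is $H$-adjacent to $u$. Fixing $y_1\in Y$ and letting $a$ be the middle vertex of a shortest $u$--$y_1$ path, I have $au,ay_1\in E_H$ with $a\neq z$ (else $zy_1\in E_G$ contradicts b)) and $a\neq v$ (as $uv\notin E_H$). If $a\in N_G(v)$ then $a\in X\cup Y$, and $a\notin X$ (else $ay_1$ is an $X$--$Y$ edge, contradicting b)), so $a\in Y$, contradicting that $Y$ is at $H$-distance $2$ from $u$; if $a\notin N_G(v)$ then $a\in N_G(u)\setminus N_G[v]$ and condition c) gives $ay_1\notin E_G$, contradicting $ay_1\in E_H$. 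Either way we reach a contradiction, so $uv\in E_H$. Part ii) is then one line: if $w\in N_G(u)\setminus N_G[v]$ had $wu\in E_H$, then $w$--$u$--$v$ is a path of length $2$ in $H$, so $wv\in E_G$, contradicting $w\notin N_G[v]$; the other half is symmetric.

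For parts iii) and iv) I would first record two structural facts that follow from i). Since every $t\in N_H(u)\setminus\{v\}$ satisfies $\dist_H(t,v)\le\dist_H(t,u)+\dist_H(u,v)=2$, we obtain $N_H(u)\subseteq(X\cup Y)\cup\{v\}$ and, symmetrically, $N_H(v)\subseteq(X\cup Y)\cup\{u\}$. Next, condition b) forbids an $x_i$ and a $y_j$ from sharing an $H$-neighbour in $\{u,v\}$, so each of $N_H(u)\cap(X\cup Y)$ and $N_H(v)\cap(X\cup Y)$ is contained entirely in $X$ or entirely in $Y$; a short argument, again using b) together with $X,Y\neq\emptyset$, shows that both of these sets are nonempty and that they cannot both be confined to the same side. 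This produces the dichotomy that, up to swapping, $N_H(u)$ meets $X\cup Y$ only in $X$ while $N_H(v)$ meets it only in $Y$, or vice versa.

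In either configuration I would then show that \emph{every} $x_i$ is $H$-adjacent to $u$ and \emph{every} $y_j$ is $H$-adjacent to $v$: in the first configuration an $x_i$ with $\dist_H(x_i,u)=2$ is not $H$-adjacent to $v$, so its shortest path to $v$ runs through a vertex of $N_H(v)\cap(X\cup Y)\subseteq Y$, producing an $X$--$Y$ edge and violating b). This gives exactly the two stated alternatives of iii). For iv) I would use non-twinness to kill the wrong configuration: if $u,v$ are not true twins then $N_G(u)\setminus N_G[v]$ or $N_G(v)\setminus N_G[u]$ is nonempty, and taking a witness $w\in N_G(u)\setminus N_G[v]$, part ii) gives $\dist_H(w,u)=2$, so $w$ has an $H$-neighbour $m\in N_H(u)\cap(X\cup Y)$ (it cannot be $v$, else $wv\in E_G$). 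In the configuration where $N_H(u)$ meets $X\cup Y$ only in $Y$, this $m$ lies in $Y$, so $wm\in E_G$ with $m\in Y$ contradicts condition c); hence the configuration of iv) must hold. A witness in $N_G(v)\setminus N_G[u]$ is handled identically using the other clause of c).

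The step I expect to be the main obstacle is part i): because conditions c) and d) are not symmetric in $X$ and $Y$, the reduction ``assume $z\in X$'' genuinely requires the $u\leftrightarrow v,\,X\leftrightarrow Y$ invariance, and one must track the middle vertex $a$ precisely enough to land in exactly the case to which c) applies. The dichotomy step underlying iii) is the other delicate point, since it is where conditions a) and b) alone—without the orientation supplied by c),d)—must already force the global attachment pattern into just two possibilities.
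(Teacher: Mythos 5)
Your proof is correct, and for parts i) and ii) it follows essentially the paper's route: i) is a contradiction argument that takes a common $H$-neighbour $z$ of $u$ and $v$, places it without loss of generality in $X$ (using the $u\leftrightarrow v$, $X\leftrightarrow Y$ invariance, exactly as the paper does), and derives a contradiction with b) and c) via the middle vertex of a length-two path to $y_1$ — the paper routes that path through $v$ where you route it through $u$, a cosmetic difference — and ii) is the same one-line argument. For parts iii) and iv), however, your organization is genuinely different. The paper works edge by edge: it first shows that no $x_i,x_j\in X$ (possibly equal) can satisfy $x_iu\in E_H$ and $x_jv\in E_H$, then that no $x_i$ can satisfy $x_iu,x_iv\notin E_H$, combines these with b) to get the dichotomy, and proves iv) by a contradiction argument that starts from condition d) to obtain a vertex $x_i$ with $wx_i\in E_G$. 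You instead first confine $N_H(u)$ and $N_H(v)$ to $(X\cup Y)\cup\{u,v\}$, use b) to show each of $N_H(u)\cap(X\cup Y)$ and $N_H(v)\cap(X\cup Y)$ lies entirely in one of $X,Y$, then establish nonemptiness and the cross-over dichotomy, and finally upgrade confinement to completeness. The sub-arguments you leave as ``short arguments'' do go through: if $N_H(u)\cap(X\cup Y)=\emptyset$, then any $x_1\in X$ and $y_1\in Y$ can reach $u$ in $H$ only through $v$, so both are $H$-neighbours of $v$ and hence $G$-adjacent, violating b); and if both sets were confined to $X$, then $y_1$ could reach $u$ in $H$ only through a middle vertex in $X$ (violating b)) or through $v$ (contradicting the confinement of $N_H(v)$ to $X$). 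The most substantive difference is your proof of iv): it uses only part ii) and condition c) — the witness $w$'s middle vertex toward $u$ lies in $N_H(u)\cap(X\cup Y)$, which in the wrong configuration would lie in $Y$ — whereas the paper's proof invokes condition d). Your route is shorter and shows that condition d) is not actually needed as a hypothesis for this lemma (it is of course still needed for Lemma~\ref{lem:edge-one}, where it must be verified); what the paper's version buys is that its case analysis stays entirely at the level of the edges named in the statement, so the final bookkeeping for iii) is immediate.
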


\begin{proof} The proof
uses conditions a)--d) of
Definition~\ref{d-rec}.

To prove i), suppose that $uv\notin E_H$. Then there is a vertex $z\in N_G(u)\cap N_G(v)$ such that $zu,zv\in E_H$. Assume without loss of generality that $z\in X$. Because of b), $zy_1\notin E_G$, which implies, together with $zv\in E_H$, that $vy_1\notin E_H$. 
Because $vy_1\in E_G$, this means that there is a vertex~$w$ with 
 $vw,wy_1\in E_H$. Because we assume $uv\notin E_H$, we observe that $w\neq u$.
By b), $w\notin X$ and, therefore, $w\in N_G(v)\setminus N_G(u)$. As $zv,vw\in E_H$, we obtain $wz \in E_G$. However, as $z\in X$, this contradicts c). We conclude that $uv\in E_H$.

To prove ii), it suffices to consider the case in which $w\in N_G(u)\setminus N_G[v]$, as the other case is symmetric. If $wu \in E_H$, then because $uv\in E_H$, we have $wv\in E_G$ contradicting $w\notin N_G(v)$.

We now prove iii) and iv). First suppose 
that there exist vertices $x_i$ and $x_j$ (with possibly $i=j$) for some $i,j\in \{1,\ldots,p\}$ such that
$x_iu,x_jv\in E_H$. 
Then, as $x_iy_1, x_jy_1\notin E_G$ by b),
we find that $y_1u,y_1v\notin E_H$. 
As $y_1u\in E_G$, the fact that $y_1u\notin E_H$ means that there exists a vertex $w\in V_H\setminus \{u\}$
 such that $wu,wy_1\in E_H$. As $y_1v\notin E_H$, we find that $w\neq v$, so $w\in V_H\setminus \{u,v\}$.
 As $x_iu,uw\in E_H$, we find that $x_iw\in E_G$, consequently $w\notin Y$ due to b).
 Because $wy_1\in E_H$ we obtain $w\notin X$, again due to b). Hence, $w\notin X\cup Y=N_G(u)\cap N_G(v)$. Therefore, as $uw\in E_G$ and $w\neq v$, we have $w\in N_G(u)\setminus N_G[v]$, but as $wy_1\in E_G$ this contradicts c). 
Hence, this situation cannot occur.

Suppose that there a vertex $x_i$ for some $i\in \{1,\ldots,p\}$ such that $x_iu,x_iv\notin E_H$. Then, as $x_iv\in E_G$, there exists a vertex
$w\in V_H\setminus \{u,v\}$, such that $wv,wx_i\in E_H$. By b), $w\notin Y$. As $uv\in E_H$ due to statement i) and $vw\in E_H$, we find that $uw\in E_G$. Hence, as $w\notin Y$, we obtain $w\in X$. 
As $x_iu \in E_G\setminus E_H$ and $x_iv\notin E_H$, there is a vertex $z\in V_H\setminus \{u,v\}$ such that $zu,zx_i\in E_H$. As $uv\in E_H$ due to statement i), this implies that $zv\in E_G$. Hence, $z\in X\cup Y$. As $zx_i\in E_H$, we find that $z\notin Y$ due to b). Consequently, $z\in X$. This means that we have vertices $w,z\in X$ (possibly $w=z$) and edges  $zu, wv\in E_H$. However, we already proved above that this is not possible.
We obtain that either $ux_1,\ldots,ux_p\in E_H$ and $vx_1,\ldots,vx_p\notin E_H$, or that
$ux_1,\ldots,ux_p\notin E_H$ and $vx_1,\ldots,vx_p\in E_H$. Symmetrically,
either $uy_1,\ldots,uy_q\in E_H$ and $vy_1,\ldots,vy_q\notin E_H$, or
$uy_1,\ldots,uy_q\notin E_H$ and $vy_1,\ldots,vy_q\in E_H$. By b),  it cannot happen that 
$ux_1,uy_1\in E_H$ or $vx_1,vy_1\in E_H$. Hence, either 
$ux_1,\ldots,ux_p\in E_H$,  $vy_1,\ldots,vy_q\in E_H$ and 
$uy_1,\ldots,uy_q\notin E_H$,  $vx_1,\ldots,vx_p\notin E_H$ or
$ux_1,\ldots,ux_p\notin E_H$,  $vy_1,\ldots,vy_q\notin E_H$ and 
$uy_1,\ldots,uy_q\in E_H$,  $vx_1,\ldots,vx_p\in E_H$. 
In particular, this implies iii).

To prove iv), assume without loss of generality that $N_G(u)\setminus N_G[v]\neq \emptyset$. 
For contradiction, let $ux_1,\ldots,ux_p\notin E_H$,  $vy_1,\ldots,vy_q\notin E_H$ and 
$uy_1,\ldots,uy_q\in E_H$,  $vx_1,\ldots,vx_p\in E_H$. Let $w\in N_G(u)\setminus N_G[v]$. 
By d), there is a vertex $x_i$ for some $i\in\{1,\ldots,p\}$ such that $wx_i\in E_G$. 
Then $wx_i\notin E_H$, as otherwise our assumption that $vx_i\in E_H$ will imply that $w\in N_G(v)$, which is not possible. Since $wx_i\in E_G\setminus E_H$, there exists a vertex~$z\in V_H$, such that $zw,zx_i\in E_H$.
Because $x_iu\notin E_H$, we find that $z\neq u$, and because $w\notin N_G(v)$, we find that $z\neq v$.
Because $zx_i,x_iv\in E_H$, we obtain $zv\in E_G$. 
As $w\notin N_G(v)$ and $vx_j\in E_H$ for all $j\in \{1,\ldots,p\}$, we have $wx_j\notin E_H$ for all $j\in \{1,\ldots,p\}$.
Hence, as $zw\in E_H$, we find that $z\notin X$. As $zx_i\in E_H$, we find that $z\notin Y$ due to b). 
Hence, $z\notin X\cup Y=N_G(u)\cap N_G(v)$. As $zv\in E_G$, this implies that $z\in N_G(v)\setminus N_G[u]$ (recall that $z\neq u$). 
Because $zx_i\in E_G$, this is in contradiction with c).\qed
\end{proof}

\noindent
{\bf Remark 1.}
If the vertices $u$ and $v$ of the recognizable edge of the square~$G$ in  Lemma~\ref{lem:edge-two} are true twins, then by statement~iii) of this lemma and the fact that the vertices $u$ and $v$ are interchangeable, $G$ has 
at least two isomorphic square roots: one root containing $ux_1,\ldots,ux_p$,  $vy_1,\ldots,vy_q$ and excluding
$uy_1,\ldots,uy_q$,  $vx_1,\ldots,vx_p$, and another one containing 
$ux_1,\ldots,ux_p$,  $vy_1,\ldots,vy_q$ and excluding 
$uy_1,\ldots,uy_q$,  $vx_1,\ldots,vx_p$.

\section{The Edge Reduction Rule}\label{s-reduce}

In this section we present our edge reduction rule.
As mentioned in Section~\ref{s-our}, we solve a more general problem than {\sc Square Root}.
Before discussing the edge reduction rule, we first formally define this problem (see also~\cite{CochefertCGKP13}).

\begin{description}
\item [{\sc Square Root with Labels}] 
\item[Input:] a graph $G$ and two sets of edges $R,B\subseteq E_G$.
\item[Question:] is there a graph $H$ with $H^2=G$, 
$R\subseteq E_H$ and $B\cap E_H=\emptyset$?
\end{description}

\noindent
Note that {\sc Square Root}  is indeed a special case of {\sc Square Root with Labels}: choose $R=B=\emptyset$. 

We say that a graph $H$ is a \emph{solution} for an instance $(G,R,B)$ of {\sc Square Root with Labels} if $H$ satisfies the following three conditions:
(i) $H^2=G$; (ii) $R\subseteq E_H$; and (iii)~$B\cap E_H=\emptyset$.

\medskip
\noindent
We use Lemmas~\ref{lem:edge-one} and \ref{lem:edge-two} to preprocess instances of {\sc Square Root with Labels}. Our edge reduction algorithm takes as input an instance $(G,R,B)$ of {\sc Square Root with Labels} and either returns an equivalent instance 
with no recognizable edges or answers {\sc no}. 

\medskip
\noindent
{\bf Edge Reduction}

\begin{enumerate}
\item Find  a recognizable edge $uv$ together with corresponding $(u,v)$-partition $(X,Y)$, $X=\{x_1,\ldots,x_p\}$ and $Y=\{y_1,\ldots,y_q\}$. If such an edge $uv$ does not exist, then return the obtained instance of   {\sc Square Root with Labels} and stop. 

\item 
If $uv\in B$ then return {\tt no} and stop. Otherwise let $B_1=\{wu\; | \; w\in N_G(u)\setminus N_G[v]\}\cup \{wv\; |\; w\in N_G(v)\setminus N_G[u]\}$. If $R\cap B_1\neq\emptyset$, then return {\tt no} and stop.

\item
If $u$ and $v$ are not true twins then set $R_2=\{ux_1,\ldots,ux_p\} \cup \{vy_1,\ldots,vy_q\}$ and $B_2=\{uy_1,\ldots,uy_q\} \cup \{vx_1,\ldots,vx_p\}$. If $R_2\cap B\neq \emptyset$ or $B_2\cap R\neq\emptyset$, then return {\tt no} and stop.

\item
If $u$ and $v$ are true twins then do as follows:
\begin{enumerate}
\item If $(\{uy_1,\ldots,uy_q\} \cup \{vx_1,\ldots,vx_p\}) \cap R\neq \emptyset$ or\\ 
\hspace*{4mm}$(\{ux_1,\ldots,ux_p\} \cup \{vy_1,\ldots,vy_q\}) \cap B\neq \emptyset$ then\\
\hspace*{4mm}set $R_2=\{uy_1,\ldots,uy_q\} \cup \{vx_1,\ldots,vx_p\}$ and\\ \hspace*{9mm}$B_2=\{ux_1,\ldots,ux_p\} \cup \{vy_1,\ldots,vy_q\}$.\\
If $R_2\cap B\neq \emptyset$ or $B_2\cap R\neq\emptyset$, then return {\tt no} and stop.\\[-10pt]
\item If $(\{uy_1,\ldots,uy_q\} \cup \{vx_1,\ldots,vx_p\}) \cap R= \emptyset$ and\\
\hspace*{4mm}$(\{ux_1,\ldots,ux_p\} \cup \{vy_1,\ldots,vy_q\}) \cap B= \emptyset$ then\\
\hspace*{4mm}set $R_2=\{ux_1,\ldots,ux_p\} \cup \{vy_1,\ldots,vu_q\}$ and\\ \hspace*{9mm}$B_2=\{uy_1,\ldots,uy_q\} \cup \{vx_1,\ldots,vx_p\}$.\\
(Note that $R_2\cap B=\emptyset$ and $B_2\cap R=\emptyset$.)
\end{enumerate}
\item Delete the edge $uv$ and the edges of $B_1$ from $G$, set $R:=(R\setminus \{uv\})\cup R_2$ and $B:=(B\setminus B_1)\cup B_2$, and return to Step~1.
\end{enumerate}

\begin{lemma}\label{lem:preproc}
For an instance $(G,R,B)$ of {\sc Square Root with Labels} where $G$ has $n$ vertices and $m$ 
edges,  {\bf Edge Reduction} in time $O(n^2m^2)$ either correctly answers {\sc no} or returns an equivalent instance $(G',R',B')$  with the following property: for any square root $H$ of $G'$, every edge of $H$ is either a pendant edge of $H$ or is included in a cycle of length at most 6 in $H$. Moreover, $(G',R',B')$ has a solution~$H$ if and only if $(G,R,B)$ has a solution that can be obtained from $H$ by restoring all recognizable edges.
\end{lemma}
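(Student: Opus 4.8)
The plan is to prove the three assertions---equivalence, the structural property, and the running time---by analysing a \emph{single} iteration of \textbf{Edge Reduction} and then inducting on the number of iterations. For one iteration processing a recognizable edge $uv$ with $(u,v)$-partition $(X,Y)$, the key claim I would establish is: $(G,R,B)$ has a solution if and only if the instance produced by Step~5 does, and a solution $H'$ of the latter yields a solution of the former precisely by adding the edge $uv$ back. This is the single-edge form of ``restoring all recognizable edges''. Granting this claim, the full equivalence and the final sentence follow by composing the per-iteration maps, and each \texttt{no} output is correct because it is returned exactly when the local structure forced by Lemma~\ref{lem:edge-two} is incompatible with the prescribed labels.

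First I would take a solution $H$ of $(G,R,B)$. By Lemma~\ref{lem:edge-two}(i) we have $uv\in E_H$, so $uv\in B$ forces a \texttt{no} (justifying Step~2); by (ii) the edges of $B_1$ avoid $E_H$, so a required edge in $B_1$ rules out a solution; and by (iii)--(iv) the incidences between $\{u,v\}$ and $X\cup Y$ are forced, justifying the remaining tests in Steps~3--4. Using (i)--(iv) I can pin down the neighbourhoods exactly, namely $N_H(u)=X\cup\{v\}$ and $N_H(v)=Y\cup\{u\}$ after possibly interchanging $u$ and $v$ in the true-twin case; in particular $u$ and $v$ have no common neighbour in $H$, and $H$ has no edge between $X$ and $Y$. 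I then set $H':=H-uv$ and check it solves the reduced instance: the forbidden labels transfer directly (deleting an edge creates none), and the required labels---including the newly recorded $R_2$---lie in $E_{H'}$ by Lemma~\ref{lem:edge-two} together with $uv\notin R_2$. The heart of the matter is the identity $H'^2=G'$, treated next.

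The crux, and the step I expect to be the main obstacle, is controlling how the squared graph changes when the single edge $uv$ is deleted from, or restored to, a candidate root. Since a path of length at most two is affected by removing $uv$ only if it traverses $uv$, the symmetric difference between $H^2$ and $(H-uv)^2$ is confined to pairs with one endpoint in $N_H(u)$ and the other in $N_H(v)$; using that $u,v$ have no common neighbour and that $X,Y$ are non-adjacent in $H$, exactly the edge $uv$ together with the edges from $u$ to $Y$ and from $v$ to $X$ leave the square, and nothing else does. This matches the edges removed in Step~5 and the edges newly recorded as forbidden, giving $H'^2=G'$. For the converse I start from a solution $H'$ of the reduced instance; the updates to $R$ and $B$ from Steps~3--5, together with $E_{H'}\subseteq E_{G'}$, force $N_{H'}(u)=X$ and $N_{H'}(v)=Y$, so restoring $uv$ creates at distance two exactly the pairs $\{u\}\times Y$ and $X\times\{v\}$ and no others, yielding $(H'+uv)^2=G$. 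This bookkeeping---verifying that no spurious distance-two pair is created or destroyed---is precisely where all four conditions a)--d) of Definition~\ref{d-rec} and statements (i)--(iv) of Lemma~\ref{lem:edge-two} are needed. The true-twin case requires one extra observation: since $u$ and $v$ are interchangeable, any solution can be relabelled (cf.\ Remark~1) to agree with the orientation committed to in Step~4, so fixing one of the two symmetric options loses no solutions.

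Finally, each iteration deletes at least the edge $uv$, so $|E_G|$ strictly decreases and the algorithm halts after at most $m$ iterations, at which point Step~1 certifies that no recognizable edge remains in the output $(G',R',B')$. Because $G'$ has no recognizable edge, Corollary~\ref{lem:edge-one2} gives the stated structural property verbatim: every non-pendant edge of any square root of $G'$ lies on a cycle of length at most $6$. For the running time, deciding whether a fixed edge $uv$ is recognizable amounts to testing whether $G[N_G(u)\cap N_G(v)]$ is a disjoint union of two cliques (conditions a),b)) and then verifying c),d) against the private neighbourhoods, all doable in $O(n^2)$ time; scanning all edges for a recognizable one thus costs $O(n^2m)$ per iteration, and with at most $m$ iterations the total is $O(n^2m^2)$. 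Composing the per-iteration equivalences and the per-iteration ``add one edge'' maps then proves the concluding statement about solutions obtained by restoring all recognizable edges.
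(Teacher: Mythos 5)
Your proposal follows the same route as the paper's own proof: analyse a single iteration, justify each \texttt{no} answer via Lemma~\ref{lem:edge-two}, establish the per-iteration equivalence through the correspondence $H\mapsto H-uv$, derive the structural property from Corollary~\ref{lem:edge-one2} once Step~1 certifies that no recognizable edge remains, and use the identical $O(n^2m)$-per-iteration, at-most-$m$-iterations time bound. The one place where you go beyond the paper is the crux: the paper dismisses the equivalence with ``a graph $H$ is readily seen to be a solution for $(G,R,B)$ if and only if $H-uv$ is a solution for $(G',R',B')$'', whereas you actually compute how the square changes. That computation is correct: since Lemma~\ref{lem:edge-two} pins down $N_H(u)=X\cup\{v\}$ and $N_H(v)=Y\cup\{u\}$ (up to swapping $u$ and $v$ in the true-twin case), the pairs that leave the square when $uv$ is deleted from $H$ are exactly $uv$, $\{u\}\times Y$ and $X\times\{v\}$, i.e.\ $\{uv\}\cup B_2$.

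The gap is your next assertion, ``this matches the edges removed in Step~5, giving $H'^2=G'$'': it is false for the algorithm as printed. Step~5 deletes $uv$ and $B_1$ (the edges from $u$ and $v$ to their \emph{private} neighbours), not $B_2$. A $B_1$-edge $uw$ does \emph{not} leave the square: by Lemma~\ref{lem:edge-two} and conditions c), d) of Definition~\ref{d-rec} it is realized in $H$ by a path $ux_iw$ with $x_i\in X$, which survives the deletion of $uv$; meanwhile the $B_2$-edges, which \emph{do} leave the square, are kept in $G'$ and merely labelled forbidden. Hence $(H-uv)^2=G-uv-B_2\neq G-uv-B_1=G'$. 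The discrepancy is fatal to the statement itself: let $G$ be the square of the path $auvb$, so $E_G=\{au,uv,vb,av,ub\}$. Then $uv$ is recognizable with partition $(\{a\},\{b\})$, $u,v$ are true twins, $B_1=\emptyset$, $B_2=\{ub,va\}$, and the printed algorithm outputs the $4$-cycle $aubva$ with $R'=\{ua,vb\}$, $B'=\{ub,va\}$; a $4$-cycle has no square root at all, so this is a no-instance, although $(G,\emptyset,\emptyset)$ is a yes-instance. So no proof of the lemma as literally stated exists; your computation is precisely the one that exposes this, and the conclusion to draw from it is that Step~5 should delete $uv$ and $B_2$ from $G$ and set $B:=(B\setminus B_2)\cup B_1$ (keeping the $B_1$-edges, which remain in $G'$, forbidden). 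Under that correction your argument---including the converse direction, where $R_2\subseteq R'$ and the forbidden edges force $N_{H'}(u)=X$ and $N_{H'}(v)=Y$, so that $(H'+uv)^2=G'\cup\{uv\}\cup B_2=G$---goes through and yields a proof that is in fact more complete than the paper's, which hides exactly this point behind ``readily seen''. As submitted, however, your proof asserts a false identity at its central step instead of flagging the mismatch.
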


\begin{proof}
It suffices to consider one iteration of the algorithm to prove its correctness.
If we stop at Step~1 and return the obtained instance of {\sc Minimum Square Root with Labels}, then by Lemma~\ref{lem:edge-one}, for any square root $H$ of $G'$, every non-pendant edge of $H$ is included in a cycle of length at most 6 in $H$.

To show the correctness of Step~2, we note that
by Lemma~\ref{lem:edge-two}~i), $uv$ is included in any square root and the edges of $B_1$ are not included in any square root. Hence, if what we do in Step~2 is not consistent with $R$ and $B$, there is no square root of $G$ that includes the edges of $R$ and excludes the edges of $B$, thus returning output {\tt no} is correct.

To show the correctness of Step~3, suppose  $u$ and $v$ are not true twins. 
Then by Lemma~\ref{lem:edge-two}~iv) it follows that
$ux_1,\ldots,ux_p\in E_H$, $vy_1,\ldots,vy_q\in E_H$, $uy_1,\ldots,uy_q\notin E_H$ and $vx_1,\ldots,vx_p\notin E_H$ for any square root $H$.
Hence, we must define $R_2$ and $B_2$ according to this lemma. 
If afterwards we find that $R_2\cap B\neq \emptyset$ or $B_2\cap R\neq\emptyset$, then $R_2$ or $B_2$ is not consistent with $R$ or $B$, respectively, and thus, retuning {\tt no} if this case happens is correct.

To show the correctness of Step~4, suppose that $u$ and $v$ are true twins. Then by Lemma~\ref{lem:edge-two}~iv) we have two options.  First, if $(\{uy_1,\ldots,uy_q\} \cup \{vx_1,\ldots,vx_p\}) \cap R\neq \emptyset$ or  $(\{ux_1,\ldots,ux_p\} \cup \{vy_1,\ldots,vy_q\}) \cap B\neq \emptyset$, then we are forced to go for the option as defined in Step~4(a). If afterwards $R_2\cap B\neq \emptyset$ or $B_2\cap R\neq\emptyset$, then we still need to return {\tt no} as in Step~3. Second, if $(\{uy_1,\ldots,uy_q\} \cup \{vx_1,\ldots,vx_p\}) \cap R= \emptyset$ and $(\{ux_1,\ldots,ux_p\} \cup \{vy_1,\ldots,vy_q\}) \cap B= \emptyset$, then we may set without loss of generality  (cf.~Remark~1) that $R_2=\{ux_1,\ldots,ux_p\} \cup \{vy_1,\ldots,vu_q\}$ and $B_2=\{uy_1,\ldots,uy_q\} \cup \{vx_1,\ldots,vx_p\}$. Note that in this case  $R_2\cap B=\emptyset$ and $B_2\cap R=\emptyset$.

Finally, to show the correctness of Step~5, let $G'$ be the graph obtained from $G$ after deleting  
the edge $uv$ and the edges of $B_1$. Let $R'=(R\setminus \{uv\})\cup R_2$ and $B'=(B\setminus B_1)\cup B_2$.
Then the instances $(G,R,B)$ and $(G',R',B')$ are equivalent: 
a graph $H$ is readily seen to be a solution for $(G,R,B)$ if and only if  $H-uv$ is a solution for $(G',R',B')$.
This completes the correctness proof of our algorithm.

It remains to evaluate the running time. We can find a recognizable edge $uv$ together with the corresponding $(u,v)$-partition $(X,Y)$ in time $O(mn^2)$. This can be seen as follows. For each edge $uv$, we find $Z=N_G(u)\cap N_G(v)$. 
Then we check conditions a) and b) of Definition~\ref{d-rec}, that is, we check whether $Z$ is the union of two disjoint cliques with no edges between them.
Finally, we check conditions~c) and~d) of 
Definition~\ref{d-rec}.
For a given $uv$, this can all be done in time $O(n^2)$. As we need to check at most $m$ edges, one iteration takes time $O(mn^2)$.  As the total number of iterations is at most $m$, the whole algorithm runs in time $O(n^2m^2)$.\qed
\end{proof}

\section{The Linear Kernel}\label{s-kernel}

For proving that {\sc Square Root with Labels} restricted to planar$+kv$ graphs has a linear kernel when parameterized by~$k$, 
we will use the following result of Harary, Karp and Tutte as a lemma.

\begin{lemma}[\cite{HKT67}]\label{lem:planar}
A graph $H$ has a planar square if and only if
\begin{itemize}
\item[i)]  every vertex $v\in V_H$ has degree at most~$3$,
\item[ii)] every block of $H$ with more than four vertices is a cycle of even length, and
\item[iii)] $H$ has no three mutually adjacent cut vertices. 
\end{itemize}
\end{lemma}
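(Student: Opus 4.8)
The plan is to prove both directions of the equivalence, and I would first reduce to the case that $H$ is connected: the square of a disjoint union is the disjoint union of the squares, and planarity is preserved under disjoint unions, so the general statement follows from the connected case. For \emph{necessity} I would dispose of (i) and (iii) by exhibiting forbidden subgraphs. If a vertex $v$ has $d_H(v)\ge 4$, then $\{v\}\cup N_H(v)$ is a set of at least five vertices that are pairwise at distance at most $2$ in $H$ (any two neighbours of $v$ meet through $v$), hence a clique containing $K_5$ in $H^2$, so $H^2$ is nonplanar. For (iii), suppose $c_1,c_2,c_3$ are mutually adjacent cut vertices. Since $c_jc_k\in E_H$, the vertices $c_j,c_k$ lie in the same component of $H-c_i$; as $c_i$ is a cut vertex with $d_H(c_i)\le 3$, it must have a third neighbour $d_i$ separated from $\{c_j,c_k\}$ by $c_i$. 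A short argument then shows $d_1,d_2,d_3$ are distinct and distinct from the $c$'s, and in $H^2$ each $d_i$ is adjacent to all of $c_1,c_2,c_3$ (directly to $c_i$, and at distance two via $c_i$ to $c_j,c_k$); this yields a $K_{3,3}$ with sides $\{c_1,c_2,c_3\}$ and $\{d_1,d_2,d_3\}$, so again $H^2$ is nonplanar.

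The remaining necessity statement (ii) is where the work concentrates. I would first record that for any block $B$ and any $x,y\in V_B$ one has $\dist_H(x,y)\le \dist_B(x,y)$, so $B^2$ is a subgraph of $H^2[V_B]\subseteq H^2$; hence it suffices to show that a $2$-connected $B$ with $|V_B|>4$ and $\Delta(B)\le 3$ that is \emph{not} an even cycle has nonplanar square. If $B$ is an odd cycle $C_{2k+1}$ with $2k+1\ge 5$, then $B^2=C_{2k+1}(1,2)$ contains a subdivision of $K_{3,3}$ (for $C_5$ it is already $K_5$, and for longer odd cycles one can write down an explicit $K_{3,3}$-subdivision using a single degree-two ``detour'' vertex). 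If $B$ is not a cycle, it has a vertex of degree $3$ and, being $2$-connected, it contains a theta subgraph; I would take the $K_4$ formed in $B^2$ by a degree-$3$ vertex together with its three neighbours and extend it along the internally disjoint paths of the theta to a subdivision of $K_5$ or $K_{3,3}$. Making this last extraction \emph{uniform} over all such blocks is the combinatorially delicate point.

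For \emph{sufficiency} I would build a planar embedding of $H^2$ by induction on the block--cut tree of $H$. The base pieces are the block squares: a block on at most four vertices has its square contained in $K_4$, which is planar, and, crucially, the square of an even cycle $C_{2k}$ is the $k$-antiprism $C_{2k}(1,2)$, a convex polyhedron and hence planar, admitting an embedding in which a chosen cut vertex $c$ and its two cycle-neighbours bound a common (triangular) face. Rooting the block--cut tree and attaching blocks one at a time at their shared cut vertices, I would glue each child block-square into a face incident to the cut vertex $c$, the point being that a vertex $w$ of the child block adjacent to $c$ becomes, in $H^2$, adjacent also to the other neighbours of $c$, so $w$ must be placed in a face reaching $c$ and all its $H$-neighbours. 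Because $\Delta(H)\le 3$, the set $N_H(c)$ is a clique of size at most three in $H^2$, so the vertices that the square forces to be mutually adjacent around $c$ never exceed a $K_4$ and can always be accommodated; the only configuration that would obstruct a consistent planar merge at the cut vertices is precisely three mutually adjacent cut vertices, which (iii) forbids.

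The main obstacle is this sufficiency argument: verifying that the local planar pictures at the cut vertices can always be merged without crossings — equivalently, that $\Delta\le 3$ together with (ii) and (iii) leaves no hidden $K_5$ or $K_{3,3}$ in $H^2$ — is the heart of the theorem, and it requires careful bookkeeping of the rotation systems at cut vertices. The secondary difficulty is the uniform $K_5/K_{3,3}$ extraction in the non-cycle case of (ii); both amount to controlling how the squares of adjacent blocks interact through their shared cut vertices.
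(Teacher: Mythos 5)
The paper does not prove this lemma at all: it is quoted from Harary, Karp and Tutte \cite{HKT67} and used as a black box, so there is no in-paper proof to compare yours against; the only question is whether your blind reconstruction stands on its own. Parts of it do. The reduction to connected $H$, the $K_5$ argument for necessity of (i), the $K_{3,3}$ argument for necessity of (iii) (your $d_i$ exist because $H-c_i$ has a component avoiding $\{c_j,c_k\}$, and they are distinct by exactly the one-line argument you indicate), and the observation that $B^2$ is a subgraph of $H^2[V_B]$, which localizes (ii) to blocks, are all correct and complete.

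However, the proposal has two genuine gaps, and they sit precisely at the two places where the substance of the theorem lies; you flag both yourself without discharging either. First, necessity of (ii) for blocks that are not cycles: ``extend the $K_4$ along the theta to a subdivision of $K_5$ or $K_{3,3}$'' is a plan, not a proof. A uniform extraction does exist (for instance, with theta branch vertices $u,v$ and first path-vertices $a_1,b_1,c_1$, the set $\{u,v,a_1,b_1,c_1\}$ can serve as branch vertices of a $K_5$ subdivision, routing the $u$--$v$ and $v$--$a_1$ connections along the even- and odd-indexed vertices of one path using square edges), but this needs parity and path-length case analysis, including paths of length $1$ and thetas that do not span the block, and none of it is on the page. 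Note also that edge counting cannot substitute for explicit Kuratowski subgraphs even in the cases you treat as routine: $C_7^2$ has $14$ edges on $7$ vertices, under the planar bound of $15$. Second, and more seriously, the entire sufficiency direction is only a strategy: ``glue the antiprism and $K_4$ embeddings at cut vertices with careful bookkeeping of rotation systems'' is a restatement of what must be proven, since the inter-block edges of $H^2$ (pairs at distance $2$ through a cut vertex) are exactly what makes the theorem nontrivial, and your text acknowledges rather than resolves this. As it stands the proposal is an annotated outline of the Harary--Karp--Tutte proof with its two hard steps left open; to use the lemma one should either complete both or simply cite \cite{HKT67}, as the paper does.
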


We need the following additional terminology.
A block is \emph{trivial} if it has exactly one vertex; note that this vertex must have degree~0.
A block  is \emph{small} if it has exactly two vertices and \emph{big} otherwise. 
We say that a block is \emph{pendant} if it is a small block with a vertex of degree~1.

We need two more structural lemmas.
We first show the effect of applying our {\bf Edge Reduction Rule} on the number of vertices in a connected component of a planar graph.

\begin{lemma}\label{lem:plan-red}
Let $G$ be a planar graph with a square root. If $G$ has no recognizable edges, then every connected component of $G$ has at most $12$ vertices.
\end{lemma}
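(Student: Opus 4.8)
The plan is to combine the structural consequences of having no recognizable edges (via Corollary~\ref{lem:edge-one2}) with the Harary--Karp--Tutte characterization of graphs with planar squares (Lemma~\ref{lem:planar}), in order to bound the size of any connected component. Let $G$ be a connected planar graph with a square root $H$, and suppose $G$ has no recognizable edges. Since $G$ is planar, $G$ is in particular a planar square, so $H$ satisfies conditions i)--iii) of Lemma~\ref{lem:planar}: every vertex of $H$ has degree at most $3$, every block of $H$ with more than four vertices is an even cycle, and $H$ has no three mutually adjacent cut vertices. In parallel, Corollary~\ref{lem:edge-one2} tells us that every non-pendant edge of $H$ lies on a cycle of length at most $6$. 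The whole argument will be a structural analysis of $H$ under these two sets of constraints, concluding that $H$ (and hence $G$, which has the same vertex set) can have at most $12$ vertices.

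First I would analyze the block structure of $H$. The key tension is between the ``short cycle'' condition and the ``long even cycle'' condition: any big block that is a cycle must have length at most $6$ (because each of its edges is non-pendant and lies on that cycle, so by Corollary~\ref{lem:edge-one2} the cycle has length at most $6$), and by condition ii) such a cycle must be even, hence of length exactly $4$ or $6$. A block on more than four vertices that is an even cycle therefore has exactly $6$ vertices. Blocks on at most four vertices are small (two vertices) or big but bounded; since $\Delta(H)\le 3$, these small cases are easily enumerated. So every block of $H$ has at most $6$ vertices. The next step is to bound the number of blocks. Here condition iii) (no three mutually adjacent cut vertices) together with the degree bound $\Delta(H)\le 3$ should sharply restrict how blocks can be glued together at cut vertices: a cut vertex of degree at most $3$ can belong to only a few blocks, and the forbidden-triangle-of-cut-vertices condition limits the ``branching'' of the block-cut tree. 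I would argue that the block-cut tree has bounded size, and combining this with the per-block vertex bound yields an absolute constant bound on $|V_H|$.

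I expect the main obstacle to be getting the constant exactly right, namely showing the bound is $12$ rather than some larger constant, and handling the interaction at cut vertices carefully. The delicate point is that pendant edges are exempt from Corollary~\ref{lem:edge-one2}, so pendant vertices and pendant (small) blocks could a priori proliferate; I would need to argue that the number of such pendant attachments is itself controlled by the degree bound $\Delta(H)\le 3$ and by the block structure, so they cannot add arbitrarily many vertices. The cleanest route is probably to first establish that the ``core'' of $H$ obtained by deleting pendant vertices is a single block of bounded size (or a very small block-cut tree), and then add back the boundedly many pendant attachments, with the degree constraint capping how many pendants can hang off each vertex. Assembling these counts — at most one big block of at most $6$ vertices, plus the boundedly many degree-$3$-constrained pendant and small-block attachments — should telescope to the claimed bound of $12$.

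The principal risk in this plan is that a naive block-cut-tree argument overcounts and gives a bound well above $12$, forcing a more careful case analysis of exactly which block configurations are simultaneously compatible with all of conditions i)--iii) and the short-cycle condition. I would therefore keep the enumeration of possible block types explicit (even cycles of length $4$ or $6$, small blocks, and the few blocks on three or four vertices allowed by $\Delta(H)\le 3$), and track precisely how they may share cut vertices, so that the final vertex count is tight rather than merely finite.
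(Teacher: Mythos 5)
Your overall picture is right --- every block has at most $6$ vertices, there is at most one big block, and pendant edges capped by the degree bound give $2\cdot 6=12$ --- but the step where you bound the number of blocks rests on the wrong mechanism, and as stated it would fail. You claim that condition iii) of Lemma~\ref{lem:planar} (no three mutually adjacent cut vertices) together with $\Delta(H)\le 3$ sharply restricts the block-cut tree and bounds its size. It does not: a long path satisfies all of conditions i)--iii) (its square is planar, all blocks are small, and there is no triangle of cut vertices), yet its block-cut tree is arbitrarily large. Condition iii) is in fact never needed; the paper's proof does not use it at all. The missing idea is to apply Corollary~\ref{lem:edge-one2} to \emph{bridges}: a bridge lies on no cycle whatsoever, so every bridge of $H$ must be a pendant edge. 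This single observation is what excludes long paths and, more generally, any gluing of blocks along non-pendant small blocks, and it is exactly what you need to justify your conjectured (but unproved) claim that the ``core'' of $H$ is a single block.

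With that observation the argument closes quickly, along the lines of the paper. If $H$ has no big block, every edge is a bridge, hence pendant, so $H$ is a star on at most $4$ vertices by the degree bound. If $H$ has a big block $F$, then any cut vertex $v$ of $H$ lying in $F$ has two neighbours inside $F$ and degree at most $3$, so the only other block at $v$ is a small block; that small block is a bridge, hence pendant. In particular no second big block can exist: two big blocks cannot share a cut vertex (that vertex would have degree at least $4$), and pendant edges are dead ends, so a connected $H$ has exactly one big block, with at most one pendant edge hanging off each of its vertices. Finally, if $|V_F|\ge 7$ then Lemma~\ref{lem:planar} forces $F$ to be an even cycle of length at least $8$, contradicting Corollary~\ref{lem:edge-one2}; hence $|V_F|\le 6$ and $|V_H|\le 2|V_F|\le 12$. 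So your outline is salvageable, but the bridge observation must be inserted where you currently invoke condition iii); without it the proposal has a genuine gap.
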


\begin{proof}
Let $G$ be a planar square with no recognizable edges.
We may assume without loss of generality that $G$ is connected and $|V_G|\geq 2$. Let $H$ be a square root of $G$. Recall that $H$ is a connected spanning subgraph of $G$. Hence, it suffices to prove that $H$ has at most 12 vertices.

First suppose that $H$ does not have a big block, in which case every edge of~$H$ is a bridge. 
As $G$ has no recognizable edges, Corollary~\ref{lem:edge-one2} implies that every block of $H$ is pendant. By Lemma~\ref{lem:planar}, 
every vertex of $H$ degree at most~3. Hence, $H$ has at most four vertices.

Now suppose that $H$ has a big block $F$.
If $F$ contains no cut vertices of $H$, then $H=F$ has at most six vertices due to Corollary~\ref{lem:edge-one2} and Lemma~\ref{lem:planar}.
Assume that $F$ contains a cut vertex~$v$ of $H$.
Lemma~\ref{lem:planar} tells us that $d_H(v)\leq 3$; therefore $v$ is a vertex of exactly two blocks, 
namely $F$ and some other block~$S$. Because $F$ is big, $v$ has two neighbours in $F$. Hence, $v$ can only have one neighbour in $S$, thus
$S$ is small. As $G$ has no 
recognizable edges, Corollary~\ref{lem:edge-one2} implies that $S$ is a pendant block.
Hence, we find that $|V_G|\leq 2|V_F|$ (with equality if and only if each vertex of $F$ is a cut vertex).

If $F$ has at least seven vertices, then it follows from Lemma~\ref{lem:planar} that $F$ is a cycle of even length at least~8, which is not possible due to Corollary~\ref{lem:edge-one2}.
We conclude that $|V_F|\leq 6$ and find that $|V_G|=|V_H|\leq 2|V_F|\leq 12$.\qed
\end{proof}

We now prove our second structural lemma.

\begin{figure}[ht]
\centering\scalebox{0.9}{\input{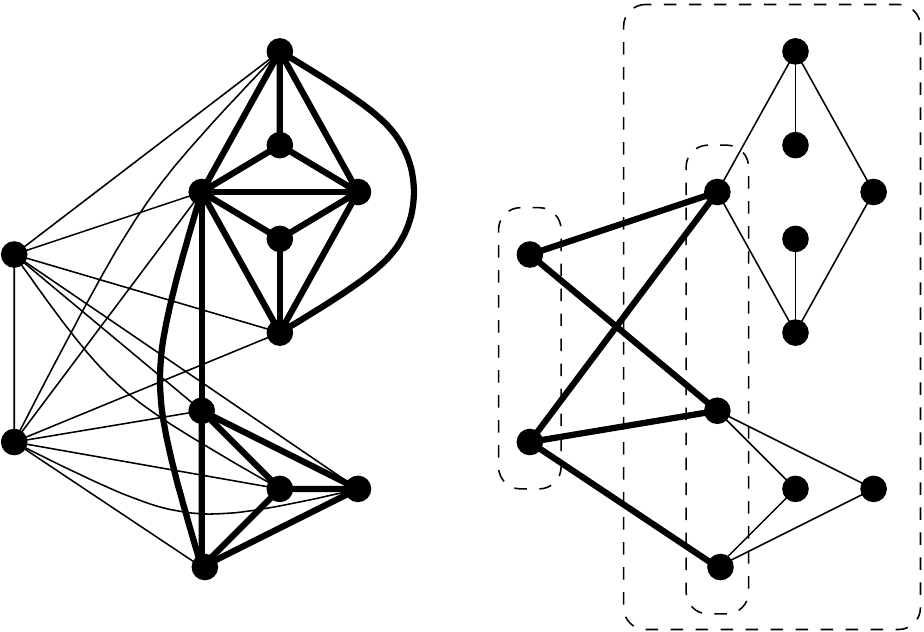_t}}
\caption{An example of a planar$+2v$ graph $G=H^2$ (left side) and a square root $H$ of $G$ (right side). The thick edges in $G$ denote the planar component; the thick edges in $H$ denote the edges of~$A$.\label{f-2kv}}
\end{figure}

\begin{lemma}\label{lem:kern-size}
Let $G$ be a planar$+kv$ graph with no recognizable edges, such that every connected component of $G$ has at least $13$ vertices.
If $G$ has a square root, then $|V_G|\leq 137k$.
\end{lemma}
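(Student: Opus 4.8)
The plan is to fix an apex set $S\subseteq V_G$ with $|S|\le k$ for which $G-S$ is planar, together with a square root $H$ of $G$; recall that $H$ is a spanning subgraph whose restriction to each component of $G$ is connected. I would first record two structural facts. First, every connected component of $G$ meets $S$: a component avoiding $S$ lies in the planar graph $G-S$, is itself a planar square with no recognizable edges, and so by Lemma~\ref{lem:plan-red} has at most $12$ vertices, contradicting the hypothesis that every component has at least $13$ vertices; hence $G$ has at most $k$ components. Second, since $(H-S)^2$ is a subgraph of the planar graph $G-S$ (any two vertices of $V_G\setminus S$ at $H$-distance $2$ via a path avoiding $S$ are adjacent in $G-S$), it is planar, so by Lemma~\ref{lem:planar} the graph $H-S$ has maximum degree at most $3$, all of its blocks on more than four vertices are even cycles, and it has no three mutually adjacent cut vertices.

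The second ingredient is a clique bound that controls the vertices near $S$. For any $w\in V_H$ the set $N_H(w)$ is a clique of $G$, because any two $H$-neighbours of $w$ are at $H$-distance $2$; intersecting with the planar, hence $K_5$-free, graph $G-S$ shows that $|N_H(w)\setminus S|\le 4$. Writing $A_i=\{v:\dist_H(v,S)=i\}$, this immediately gives $|A_1|\le 4k$, since each apex vertex has at most four non-apex $H$-neighbours; it also gives $d_H(v)\le 3$ for every $v$ with $\dist_H(v,S)\ge 2$, because then all $H$-neighbours of $v$ are non-apex and $N_H[v]$ is a clique of the planar graph $G-S$. Counting each distance-$2$ vertex as a non-apex neighbour of some vertex of $A_1$ yields $|A_2|\le 4|A_1|\le 16k$. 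Thus the near set $A=N_G[S]=A_0\cup A_1\cup A_2$ satisfies $|A|\le k+4k+16k=21k$.

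It remains to bound the far set $W=\{v:\dist_H(v,S)\ge 3\}$, which is where the real work lies. Here I would invoke Corollary~\ref{lem:edge-one2}: as $G$ has no recognizable edges, every non-pendant edge of $H$ lies on a cycle of length at most $6$. Since a bridge lies on no cycle, $H$ has no non-pendant bridge, so every small block of $H$ is a pendant edge and every remaining edge lies in a big ($2$-connected) block. I would then analyse the block--cut structure of $H$ in the far region. A big block disjoint from $S$ is a block of $H-S$, hence by Lemma~\ref{lem:planar} either has at most four vertices or is an even cycle; combined with Corollary~\ref{lem:edge-one2}, an even-cycle block has length at most $6$, so every such block has at most $6$ vertices, to which, since $\Delta(H-S)\le 3$, only a bounded number of pendant vertices can attach. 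Big blocks meeting $S$, together with the cut vertices and pendant attachments bordering $W$, are charged to the at most $k$ apex vertices and to the $O(k)$ edges joining $W$ to the near set (each vertex of $A_2$ has degree at most $3$). Because $H-S$ has maximum degree $3$ and no three mutually adjacent cut vertices, the number of blocks is linear in the number of cut vertices, so summing a constant per-block contribution over a linear number of blocks gives $|W|=O(k)$, and tracking the constants across the near and far estimates yields the claimed bound $|V_G|\le 137k$.

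The main obstacle is precisely this last step: organising the decomposition so that the planar structure of $H-S$ (Lemma~\ref{lem:planar}) and the short-cycle property of $H$ (Corollary~\ref{lem:edge-one2}) can be combined \emph{locally} in the far region. The delicate point is that recognizability is defined relative to $G$, not $G-S$, so the conclusion of Corollary~\ref{lem:edge-one2} must be localised to $W$, where the distance-$2$ balls avoid $S$ and the two notions agree; particular care is needed at the distance-$3$ shell, where a vertex may still see the apex set at $G$-distance one, and this shell is best absorbed into the near count. Once the far blocks are pinned to constant size and their number is shown to be linear in $k$, adding the near and far estimates gives $|V_G|\le 137k$.
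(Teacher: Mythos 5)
Your setup is sound and in fact mirrors the paper's opening moves: fix the apex set $S$, work with $F=H-S$, observe that $F^2$ is a subgraph of the planar graph $G-S$ so that Lemma~\ref{lem:planar} applies to $F$, and use the $K_5$-free (Kuratowski) argument to bound the shells near $S$; the paper obtains its bound $|Y|\le 4k$ on $N_H(S)\cap V_F$ in exactly this way, and your count $|A_0\cup A_1\cup A_2|\le 21k$ is fine. The genuine gap is in the far region, which is the heart of the lemma. Your argument there rests on the sentence ``the number of blocks is linear in the number of cut vertices, so summing a constant per-block contribution over a linear number of blocks gives $|W|=O(k)$.'' This is circular: nothing you have established bounds the number of cut vertices (or blocks) of $F$ lying far from $S$ by any function of $k$. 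Knowing that each far big block has at most $6$ vertices and that $\Delta(F)\le 3$ does not, by itself, exclude that $F$ contains many blocks far from $S$; that is precisely what must be ruled out. The charging scheme you gesture at (``charged to the at most $k$ apex vertices and to the $O(k)$ crossing edges'') is never specified, and in particular you never argue that each crossing edge is charged only $O(1)$ times, which would require knowing the internal structure of a far component --- i.e., the very thing that is missing. Your closing admission that the far blocks' number must still be ``shown to be linear in $k$'' concedes the point, and the final claim that ``tracking the constants \ldots yields $137k$'' cannot be carried out from your estimates: your near count ($21k$) and an unproven far count do not produce that number, which in the paper arises from a specific computation ($89k$ vertices at distance at most $4$ from $S$ plus $48k$ deeper ones).

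The missing idea is the paper's Claim~A: if $R$ is a block of $F$ that is not a pendant block of $H$, then $V_R$ is at distance at most $1$ from $Y$ in $F$. Its proof is exactly the localisation you flag as delicate, resolved in the opposite direction to your sketch: instead of arguing that short cycles stay inside the far region, one argues that a non-pendant edge $uv$ that is a bridge of $F$ lies, by Corollary~\ref{lem:edge-one2}, on a cycle of length at most $6$ in $H$; since $uv$ is a bridge of $F$, that cycle must enter and leave $S$, hence uses at least two of the edges between $S$ and $Y$, and the length budget of $6$ then forces an endpoint of $uv$ (or, for big blocks, a neighbour inside $R$) to be within distance $1$ of $Y$. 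This kills all deep structure at once: far from $Y$ the only blocks of $F$ are single pendant edges (so no long chains of small blocks can occur), and every big block of $F$ that is non-pendant in $H$ sits within distance $1$ of $Y$, whence their number is at most $3|Y|\le 12k$ by the degree-$3$ bound (the paper's Claim~B). The remaining work (Claims~C and~D plus the shell count $k+4k+12k+24k+48k=89k$ and the $48k$ bound on vertices at distance at least $5$) is bookkeeping. Your outline can be repaired, but only by proving some form of Claim~A; without it the far-region bound, and hence the lemma, does not follow.
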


\begin{proof}
Let $H$ be a square root of $G$. 
By Lemma~\ref{lem:plan-red}, $G$ cannot have any planar connected components (as these would have at most~12 vertices).
Hence, every connected component of $G$ is non-planar.

Since $G$ is planar$+kv$, there exists a subset $X\subseteq V_G$ of size at most $k$ such that $G-X$ is planar.
Let $F=H-X$. Note that $F$ is a spanning subgraph of $G-X$ and that $F^2$ is a (spanning) subgraph of $G-X$; hence 
$F^2$ is planar.
Let $Y$ be the set that consists of all those vertices of $F$ that are a neighbour of $X$ in $H$, that is 
$Y=N_H(X)\cap V_F$.
Since every connected component of 
$G$ 
is non-planar, every connected component of $F$ contains at least one vertex of $Y$. 
Let $A$ be the set that consists of all edges between $X$ and $Y$ in $H$, that is, $A=\{uv\in E(H)\mid u\in X,v\in Y\}$.
See Figure~\ref{f-2kv} for an example.

Consider a vertex $v\in X$.
By Kuratowski's Theorem, the (planar) graph $G-X$ has no clique of size~5.
Since $N_H(v)\cap (V_G\setminus X)$ is a clique in $G-X$, we find that
 $|N_H(v)\cap (V_G\setminus X)|\leq 4$.
Hence, $|Y|\leq 4|X|\leq 4k$.

We now prove three claims about the structure of blocks of $F$.

\medskip
\noindent
{\bf Claim A.} {\it If $R$ is a block of $F$ that is not a pendant block of $H$, then $V_R$ is at distance at most $1$ from $Y$ in $F$.}

\medskip
\noindent
We prove Claim~A as follows. Let $R$ be a block of $F$ that is not a pendant block of $H$.
To obtain a contradiction, assume that $V_R$ is at distance at least~2 from $Y$ in $F$. Let $u$ be a vertex of $R$ such that $\dist_F(u,Y)=\min\{\dist_F(u,v)\mid v\in V_R\}$, so $u$ is a cut vertex of $F$ that is of distance at least~2 from $Y$ in $F$.
Note that $R$ is not a trivial block of $F$, since all trivial blocks are isolated vertices of $F$ that are vertices of $Y$.

First suppose that $R$ is a small block of $F$ and let $v$ be the second vertex of $R$.
Then the edge $uv$ is a bridge of $F$.
Since $R$ is not pendant, it follows from Corollary~\ref{lem:edge-one2} that $uv$ is in a cycle of length~$C$ at most~6 in $H$.
Observe that $C$ must contain at least two edges of $A$, which implies that $u$ or $v$ is at distance at most~1 from $Y$. This is 
a contradiction.

Now suppose that $R$ is a big block of $G$. Let $v$ be the neighbour of $u$ in a shortest path between $u$ and $Y$ in $F$.
By Lemma~\ref{lem:planar}, $u$ has degree at most~3 in $F$. As $R$ is big, $u$ has at least two neighbours in $F$. Hence, $uv$ is a bridge of $F$. 
As $v$ has at least two neighbours in $F$ as well, $uv$ is not a pendant edge of $H$.
Then it follows from Corollary~\ref{lem:edge-one2} that $uv$ is in a cycle~$C$ of length at most~6 in $H$.
Observe that $C$ must contain at least two edges of $A$ and at least one edge $uw$ of $R$ for some vertex $w\neq u$ in $R$.
Hence, $w$ is at distance at most 1 from $Y$, which is a contradiction.
This completes the proof of Claim~A.

\medskip
\noindent
By Lemma~\ref{lem:planar}, every vertex of $F$ has degree at most~3 in~$F$. 
Hence the following holds:

\medskip
\noindent
{\bf Claim B.} {\it For every $u\in Y$, $F$ has at most three big blocks at distance at most~$1$ from $u$.}

\medskip
\noindent
Let $Z$ be the set of vertices of $F$ at distance at most 3 from $X$ in $H$.

\medskip
\noindent
{\bf Claim C.} {\it If $R$ is a block of $F$ with $V_R\setminus Z\neq\emptyset $, then $|V_R|\leq 6$.}

\medskip
\noindent
We prove Claim~$C$ as follows. Suppose $R$ is a block of $F$ with $V_R\setminus Z\neq\emptyset$. 
For contradiction, assume that $|V_R|\geq 7$.
Then, by Lemma~\ref{lem:planar}, $R$ is a cycle of $F$ of even size.
As $V_R\setminus Z\neq\emptyset$ and $R$ is connected, there exists an edge $uv$ of $F$ with $u\notin Z$.
By Corollary~\ref{lem:edge-one2}, we find that
$uv$ is in a cycle~$C$ of $H$ of length at most~6.
Since $u$ is at distance at least~4 from $X$ in $H$, we find that $C$ contains no vertex of $X$ and therefore, $C$ is a cycle of $F$.
Then $R=C$ must hold, which is a contradiction as $|V_R|\geq 7>6\geq |V_C|$.
This completes the proof of Claim~$C$.

\medskip
\noindent
We will now show that the diameter of $F$ is bounded.
We start with proving the following claim.

\medskip
\noindent
{\bf Claim D.} {\it Every vertex of every block $R$ of $F$ that is non-pendant in $H$ is at distance at most~$5$ from~$X$ in $H$. Moreover,
\begin{itemize}
\item[i)]  if $R$ has a vertex at distance at least $4$ from $X$ in $H$, then $R$ is a big block,
\item[ii)] $R$ has at most three vertices at distance at least $4$  and at most one vertex at distance~$5$ from $X$ in $H$.
\end{itemize}
}

\medskip
\noindent
We prove Claim~D as follows.
Let $R$ be a block of $F$ that is non-pendant in $H$.  
Claim~A tells us that $V_R$ is at distance at most 1 from $Y$ in $F$.

If $R$ is a small block, then every vertex of $R$ is at distance at most 2 from $Y$. Hence, every vertex of $R$ is at distance at most 3 from $X$ in $H$ and the claim holds for $R$. 

Let $R$ be a big block.
If $R$ has at most four vertices, then the vertices of $R$ are at distance at most 3 from $Y$ in $F$ and at most one vertex of $R$ is at distance
exactly~3. Hence, the vertices of $R$ are at distance at most 4 from $X$ in $H$ and at most one vertex of $R$ is at distance
exactly~4.
Assume that $|V_R|>4$. 
Then either $V_R\subseteq Z$, that is, all the vertices are at distance at most 3 from $X$ in $H$, 
or,
by Lemma~\ref{lem:planar} and Claim~C, we find that $R$ has at most six vertices.
As $|V_R|>4$, we find that $R$ is a cycle on six vertices by Lemma~\ref{lem:planar}.
Hence, in the latter case every vertex of $R$ is at distance at most~4 from $Y$, that is, at distance at most~5 from $X$ in $H$. Moreover,  at most three vertices are at distance at least~4 and at most one vertex is at distance 5 from $X$ in $H$ as $R$ is a cycle. 
This completes the proof of Claim~D.

\medskip
\noindent
By combining Claim~B with the fact that $|Y|\leq 4k$, we find that $F$ has at most $12k$ big blocks at distance at most~1 from $Y$. By Claims~A and D, this implies that $H$ has at most $36k$ vertices of non-pendant blocks at distance at least 4 from $X$ in $H$ and at most $12k$ vertices at distance at least~5 from $X$ in $H$.
Let $v$ be a vertex $H$ of degree 1 in $H$.
If $v$ is at distance at least 5 from $X$, then $v$ is adjacent to a vertex $u$ of a non-pendant block and $u$ is at distance at least 4 from $X$ in $H$.
Notice that $v$ is a unique vertex of degree 1 adjacent to $u$, because by Claim~D, $u$ is in a big block and $d_F(u)\leq 3$ by Lemma~\ref{lem:planar}. 
Since $H$ has at most $36k$ vertices of non-pendant blocks at distance at least 4 from $X$ in $H$, the total number of vertices of degree 1 at distance at least 5 from $X$ in $H$ is at most $36k$. 
Taking into account that there are at most $12k$ vertices at distance at least 5 from $X$ in $H$ in non-pendant blocks, we see that there are at most $48k$ vertices in $H$ at distance at least 5 from $X$ and all other vertices in $F$ are at distance at most 4 from $X$.
Using the facts that $|Y|\leq 4k$ and that $d_F(v)\leq 3$ for $v\in V_F$ by Lemma~\ref{lem:planar}, we observe that $H$ has at most 
$k+4k+12k+24k+48k=89k$ vertices at distance at most 4 from $X$.
It then follows that $|V_G|=|V_H|\leq 48k+89k=137k$.\qed 
\end{proof}

We are now ready to prove our main result.

\begin{theorem}\label{thm:kern}
\textsc{Square Root with Labels} has a kernel of size $O(k)$ for planar$+kv$ graphs when parameterized by $k$.
\end{theorem}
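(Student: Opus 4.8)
The plan is to assemble the kernelization from the preprocessing and structural lemmas already established, using the \textbf{Edge Reduction} algorithm as the main reduction step followed by a single size test. Given an instance $(G,R,B)$ of \textsc{Square Root with Labels} with $G$ planar$+kv$ and parameter $k$, I would first run \textbf{Edge Reduction}. By Lemma~\ref{lem:preproc} this runs in time $O(n^2m^2)$ and either answers \textsc{no}, in which case we output a trivial constant-size \textsc{no}-instance, or returns an equivalent instance $(G',R',B')$ in which $G'$ has no recognizable edges. Since \textbf{Edge Reduction} only deletes edges from $G$ and deleting edges cannot destroy planarity, $G'$ is still planar$+kv$ with the same parameter $k$; equivalence of the two instances is exactly the content of Lemma~\ref{lem:preproc}.

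Next I would use the fact that a square root $H$ of $G'$ induces the same partition of the vertex set into connected components as $G'$, since squaring neither merges nor splits components. Hence $(G',R',B')$ is a yes-instance if and only if every connected component, carrying the labels of $R'$ and $B'$ inside it, admits a square root consistent with those labels. Each connected component with at most $12$ vertices has bounded size, so I would decide it by brute force over all candidate subgraphs in constant time each. If some such small component has no consistent square root, the whole instance is \textsc{no} and we output a trivial \textsc{no}-instance; otherwise all small components are satisfiable and may be deleted, giving an equivalent instance $(G'',R'',B'')$ in which every connected component has at least $13$ vertices. Deleting whole components preserves both the planar$+kv$ property and the absence of recognizable edges, so $(G'',R'',B'')$ satisfies all the hypotheses of Lemma~\ref{lem:kern-size}.

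The final step is a single size test against the bound of Lemma~\ref{lem:kern-size}. If $|V_{G''}|\leq 137k$, then $(G'',R'',B'')$ already has $O(k)$ vertices and we output it as the kernel. If instead $|V_{G''}|>137k$, then by the contrapositive of Lemma~\ref{lem:kern-size} the graph $G''$ cannot have a square root at all, so $(G'',R'',B'')$, and therefore the original instance, is a \textsc{no}-instance, and we output a trivial \textsc{no}-instance. This is the one genuinely non-obvious point of the argument: rather than attempting to solve \textsc{Square Root} on a large surviving graph, we use Lemma~\ref{lem:kern-size} purely as a certificate that largeness forces the answer \textsc{no}, so we never need a routine for finding square roots on the big instance, and we also never have to compute the apex set $X$ explicitly—only the promise that it exists, which supplies the bound $137k$.

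All three steps run in polynomial time, with the $O(n^2m^2)$ \textbf{Edge Reduction} as the bottleneck, each transformation preserves yes/no equivalence while leaving the parameter $k$ unchanged, and the output always has $O(k)$ vertices, establishing the linear kernel. The main thing to be careful about is the bookkeeping of the label sets $R$ and $B$ across the reduction: one must check that restricting them to components and then discarding the satisfiable small components keeps the instances equivalent, but this follows directly from the component-wise independence of the square-root condition noted above.
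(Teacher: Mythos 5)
Your proposal is correct and follows essentially the same route as the paper's proof: run \textbf{Edge Reduction} (Lemma~\ref{lem:preproc}), then remove each connected component with at most $12$ vertices after solving it by brute force (the paper's \textbf{Component Reduction} rule), and finally compare the size of the surviving instance against the $137k$ bound of Lemma~\ref{lem:kern-size}, returning a \texttt{no}-answer when the instance is larger. Your explicit observations---that edge deletions and component deletions preserve the planar$+kv$ property and the absence of recognizable edges, and that the square-root condition decomposes over connected components---are exactly the (partly implicit) justifications the paper relies on.
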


\begin{proof}
Let $(G,R,B)$ be an instance of \textsc{Square Root with Labels}.
First we apply {\bf Edge Reduction}, which takes polynomial time due to Lemma~\ref{lem:preproc}.
By the same lemma we either solve the problem in polynomial time or obtain an equivalent instance $(G',R',B')$ with the following property: for any square root $H$ of $G'$, every edge of $H$ is either a pendant edge of $H$ or is included in a cycle of length at most~6 in $H$.
In the latter case we apply the following reduction rule exhaustively, which takes polynomial time as well.

\medskip
\noindent
{\bf Component Reduction.} If $G'$ has a connected component $F$ with $|V_F|\leq 12$, then use brute force to solve \textsc{Square Root with Labels} for 
$(F,R'\cap V_F,B'\cap V_F)$.
If this yields a {\tt no}-answer, then return {\tt no} and stop.
Otherwise, return $(G'-V_F,R'\setminus V_F,B'\setminus V_F)$ or if $G'=F$, return {\tt yes} and stop.

\medskip
\noindent
It is readily seen that this rule either solves the problem correctly or returns an equivalent instance. 
Assume we obtain an instance $(G'',R'',B'')$.
Our reduction rules do not increase the deletion distance, that is, $G''$ is a planar$+kv$ graph.
Then by Lemma~\ref{lem:kern-size}, if $G''$ has more than $137k$ vertices then $G''$, and thus $G$, has no square root.
Hence, if $|V_G''|>137k$, we have a no-instance,
in which case we return a {\tt no}-answer and stop.
Otherwise, we return the kernel $(G'',R'',B'')$.\qed
\end{proof}

\section{Another Application}\label{sec:mad}

In this section we give another application of the {\bf Edge Reduction} rule.
Let $G$ be a graph. The \emph{average degree} of $G$ is $\ad(G)=\frac{1}{|V_G|}\sum_{v\in V_G}d_G(v)=\frac{2|E_G|}{|V_G|}$.  
Then the \emph{maximum average degree} of $G$ is defined as $$\mad(G)=\max\{\ad(H)\; |\; H\text{ is a subgraph of }G\}.$$
We will show that {\sc Square Root} is polynomial-time solvable for graphs with maximum average degree less than 
$\frac{46}{11}$.  

In order to prove our result we will need a number of lemmas, amongst others three lemmas on treewidth.
A \emph{tree decomposition} of a graph $G$ is a pair $(T,X)$ where $T$
is a tree and $X=\{X_{i} \mid i\in V_T\}$ is a collection of subsets (called {\em bags})
of $V_G$ such that the following three conditions hold: 
\begin{itemize}
\item[i)] $\bigcup_{i \in V_T} X_{i} = V_G$, 
\item[ii)] for each edge $xy \in E_G$, $x,y\in X_i$ for some  $i\in V_T$, and 
\item[iii)] for each $x\in V_G$ the set $\{ i \mid x \in X_{i} \}$ induces a connected subtree of $T$.
\end{itemize}
The \emph{width} of a tree decomposition $(\{ X_{i} \mid i \in V_T \},T)$ is $\max_{i \in V_T}\,\{|X_{i}| - 1\}$. The \emph{treewidth} $\tw(G)$ of a graph $G$ is the minimum width over all tree decompositions of $G$. 
A class of graphs~${\cal G}$ has {\it bounded treewidth} if there exists a constant $p$ such that the treewidth of every graph from ${\cal G}$ is at most~$p$.

The first lemma is known
and shows that {\sc Square Root with Labels} is linear-time solvable for graphs of bounded treewidth. We give a proof for completeness.

\begin{lemma}[\cite{CochefertCGKP13}]\label{lem:tw}
The {\sc Square Root with Labels} problem can be solved in  
time $O(f(t)n)$ for $n$-vertex graphs of treewidth at most $t$.
\end{lemma}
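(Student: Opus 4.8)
The plan is to express \textsc{Square Root with Labels} as a \emph{fixed} monadic second-order property of the input graph, enriched with two edge labels, and then to invoke Courcelle's theorem, which decides any such property in time $O(f(t)n)$ on $n$-vertex graphs of treewidth at most~$t$ (see, e.g.,~\cite{CyganFKLMPPS15}). Throughout, ``$\mathrm{MSO}_2$'' refers to monadic second-order logic with quantification over vertex \emph{and} edge sets.

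The first step is the observation that every solution $H$ is a spanning subgraph of $G$: if $uv\in E_H$ then $u$ and $v$ are at distance~$1$, hence at distance at most~$2$, in $H$, so $uv\in E_G$. Thus $E_H\subseteq E_G$, and solving the instance amounts to deciding whether some edge set $F\subseteq E_G$, regarded as the edge set of a spanning subgraph $H=(V_G,F)$, satisfies $H^2=G$ while respecting the labels. This reduction is essential: it lets us quantify over a single subset of the given edge set rather than over arbitrary graphs, keeping us inside $\mathrm{MSO}_2$.

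Next I would encode the labels by treating $R$ and $B$ as two distinguished colours on the edge domain, presenting the input as a relational structure equipped with the incidence relation together with unary predicates $P_R$ and $P_B$ on edges. The solution conditions then become a single $\mathrm{MSO}_2$ sentence $\varphi$ asserting the existence of an edge set $F$ such that: (i)~every edge of $R$ lies in $F$ and no edge of $B$ lies in $F$; (ii)~for every edge $uv$ of $G$, either $uv\in F$ or some vertex $w$ satisfies $uw,wv\in F$, which captures $E_G\subseteq E_{H^2}$; and (iii)~for every pair of distinct non-adjacent vertices $u,v$ of $G$, neither $uv\in F$ holds nor does any $w$ satisfy $uw,wv\in F$, which captures $E_{H^2}\subseteq E_G$. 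Each clause uses only first-order quantification over vertices and edges beneath the lone set quantifier $\exists F$, so $\varphi$ is an $\mathrm{MSO}_2$ sentence; crucially $\varphi$ is independent of the particular instance, as only the interpretations of $P_R$, $P_B$, and the incidence relation vary.

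Finally, since $G$ has treewidth at most~$t$, I would compute a tree decomposition of width $O(t)$ in time $f_1(t)n$ and then apply the standard Courcelle machinery, which evaluates the fixed sentence $\varphi$ on the labelled structure in time $f_2(t)n$; combining these gives the claimed $O(f(t)n)$ bound. The one point needing care is checking that \emph{both} inclusions $E_G\subseteq E_{H^2}$ and $E_{H^2}\subseteq E_G$ are faithfully captured—in particular that condition~(iii) correctly forbids any distance-$2$ pair outside $E_G$—but after the spanning-subgraph reduction has confined attention to subsets of $E_G$ this is routine bookkeeping rather than a genuine obstacle. For a self-contained alternative one may instead run a direct dynamic program over the tree decomposition: since $E_H\subseteq E_G$, the same decomposition of $G$ serves for every candidate $H$, and each bag need only store the restriction of $F$ to its vertices together with, for each vertex, which already-forgotten $G$-neighbours it is joined to in $F$, so that the ``distance at most~$2$'' constraints are verified as vertices are introduced and forgotten.
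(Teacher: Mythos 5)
Your proposal is correct and follows essentially the same route as the paper: both reduce the problem to the existence of an edge subset $F\subseteq E_G$ (using that any solution is a spanning subgraph of $G$) satisfying the label constraints together with the two distance-two conditions, express this as a fixed $\mathrm{MSO}_2$ sentence, and invoke Courcelle's theorem; your condition~(iii) is just the contrapositive of the paper's condition that any two distinct edges $uw,wv\in F$ force $uv\in E_G$. The extra dynamic-programming sketch is a fine alternative but not needed.
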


\begin{proof}
It is not difficult to construct a dynamic programming algorithm for the problem, but for simplicity, we give a non-constructive proof based on Courcelle's theorem~\cite{Courcelle92}. By this theorem, it suffices to show that the existence of a square root~$H$ of a graph~$G$ can be expressed in monadic second-order logic. To see the latter, note that the existence of  a graph $H$ with $H^2=G$, $R\subseteq E_H$  and $B\cap E_H=\emptyset$ is equivalent to the existence of an edge subset $X\subseteq E_G$ that satisfies the following conditions:
\begin{itemize}
\item [(i)] $R\subseteq X$
\item [(ii)] $B\cap X=\emptyset$
\item [(iii)] for any $uv\in E_G$, either $uv\in X$ or there is a vertex $w$ with $uw,wv\in X$
\item [(iv)] for any two distinct edges $uw,wv\in X$ we have $uv\in E_G$.
\end{itemize}
This completes the proof of the lemma.\qed
\end{proof}

The second lemma is a well-known result about deciding whether a graph has treewidth at most $k$ for some constant $k$.

\begin{lemma}[\cite{Bodlaender96}]\label{l-bod}
For any fixed constant~$k$, it is possible to decide in linear time whether the treewidth of a graph is at most $k$.
\end{lemma}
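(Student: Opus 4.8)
The statement is Bodlaender's theorem, so the plan is to exhibit an algorithm whose running time obeys a geometrically shrinking recurrence. The overall strategy combines two ingredients. The first is a dynamic-programming subroutine (the Bodlaender--Kloks procedure): given \emph{any} tree decomposition of $G$ of some constant width $\ell=\ell(k)$, it decides in linear time whether $\tw(G)\le k$, and if so outputs a tree decomposition of width at most~$k$. The second is a linear-time reduction that turns $G$ into a strictly smaller graph $G'$ with the same answer, such that a width-$\le k$ decomposition of $G'$ can be \emph{lifted} back to a decomposition of $G$ of some bounded width. Given both ingredients, I would run the algorithm recursively: reduce $G$ to $G'$, recurse to obtain a width-$\le k$ decomposition of $G'$ (or a {\tt no}-answer), lift it to a width-$\le c(k)$ decomposition of $G$, and finally feed that decomposition to the Bodlaender--Kloks subroutine to test $\tw(G)\le k$ exactly. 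If at every level $|V_{G'}|\le (1-c)\,|V_G|$ for a constant $c=c(k)$ and each level costs $O(n)$ time, the recurrence $T(n)\le T((1-c)\,n)+O(n)=O(n)$ gives the claimed linear bound.

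The reduction is the combinatorial heart. First I would note that a graph of treewidth at most~$k$ has at most $k\,|V_G|$ edges, so if $|E_G|>k\,|V_G|$ we may immediately answer {\tt no}; this makes $G$ sparse, which is essential for implementing everything in linear time. Next I would establish a structural dichotomy: every $n$-vertex graph with $\tw(G)\le k$ contains a constant fraction of vertices that are reducible, in the sense that either at least $\Omega(n)$ of them are \emph{simplicial and of degree at most~$k$}, or $\Omega(n)$ of them can be grouped into a large matching of \emph{improvable} adjacent pairs of bounded degree. In the first case I remove the simplicial low-degree vertices; this does not change whether $\tw(G)\le k$, because such a vertex can always be reinserted into any bag that contains its (clique) neighbourhood without exceeding width~$k$. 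In the second case I contract the matched pairs; the improvability condition guarantees that the treewidth is preserved up to the threshold~$k$ and that uncontracting later inflates the width only by a bounded additive amount. Either way $|V_{G'}|\le (1-c)\,|V_G|$.

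The two technical obstacles are exactly these ingredients, and I expect the structural dichotomy to be the main one. Proving it requires a counting argument over an optimal tree decomposition of $G$, showing that a constant fraction of vertices sit ``low'' in the decomposition and are therefore forced to be simplicial of small degree or to lie in many short, bounded-degree configurations that a matching can capture; making the constants work out, and implementing the search for these structures (simplicial tests, maximal matchings, contractions, and edge-count maintenance) in \emph{total} linear time over all recursion levels, is delicate and relies on the sparsity established above. The second obstacle is the correctness of the Bodlaender--Kloks dynamic program: one must define a set of \emph{characteristics} of partial width-$\le k$ tree decompositions that are compatible with a given width-$\ell$ decomposition, prove that the number of distinct characteristics is bounded by a function of $k$ and $\ell$ only, and give a join/introduce/forget update rule computable in constant time per node. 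With both lemmas in hand, the geometric recurrence closes the argument and yields the linear-time decision procedure for fixed~$k$.
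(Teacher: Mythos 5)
You should first note that the paper does not prove this lemma at all: it is Bodlaender's theorem, imported as a black box with the citation \cite{Bodlaender96}, so there is no ``paper's own proof'' to compare against --- only Bodlaender's original one. Measured against that, your two-ingredient architecture is exactly right: the cited proof does combine the Bodlaender--Kloks dynamic program (test $\tw(G)\le k$ in linear time given \emph{any} constant-width decomposition) with a linear-time self-reduction to a graph of size $(1-c)n$, closed by the recurrence $T(n)\le T((1-c)n)+O(n)$. Your preliminary edge-count pruning and the lifting step (a width-$k$ decomposition of the contracted graph lifts to width $2k+1$ of $G$) are also as in Bodlaender's argument.

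However, the structural dichotomy you rely on is false as stated, and this is precisely the point where the real proof needs its key idea. You claim that every graph with $\tw(G)\le k$ has either $\Omega(n)$ vertices that are \emph{simplicial in $G$} of degree at most $k$, or a matching of size $\Omega(n)$. Take $G=K_{2,m}$ with parts $\{a,b\}$ and $\{v_1,\dots,v_m\}$: its treewidth is $2$, every maximal matching has at most two edges, and it has \emph{no} simplicial vertices whatsoever (each $v_i$ has the non-adjacent pair $\{a,b\}$ as its neighbourhood, and $a,b$ have independent neighbourhoods). So neither branch of your dichotomy fires and the recursion stalls. Bodlaender's fix is the \emph{improved graph}: add an edge between any two vertices having at least $k+1$ common neighbours of degree at most $k$ (this is safe, since such a pair must share a bag in any width-$k$ decomposition), and then look for vertices that are simplicial \emph{in the improved graph} (``I-simplicial''); in the example this adds the edge $ab$ and makes all $m$ vertices $v_i$ removable. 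Relatedly, you attach the ``improvability'' condition to the matched pairs, but that is misplaced: correctness of the contraction branch needs nothing beyond minor-monotonicity of treewidth, while the improvement operation is indispensable in the simplicial branch. Without it, the counting argument you defer to (``a constant fraction of vertices sit low in the decomposition'') cannot be made to work, since low vertices need not be simplicial in $G$ itself.
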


We also need the following result as a third lemma.

\begin{lemma}\label{l-first}
Let $H$ be a square root of a graph $G$.
Let $T$ be the bipartite graph with $V_T=\mathcal{C}\cup\mathcal{B}$, where partition classes $\cal C$ and $\cal B$ are 
the set of cut vertices and blocks of $H$, respectively, such that 
$u\in \mathcal{C}$ and $Q\in\mathcal{B}$ are adjacent if and only if $Q$ contains $u$.
For $u\in\mathcal{C}$, let $X_u$ consist of $u$ and all neighbours of $u$ in $H$.
For $Q\in \mathcal{B}$, let $X_Q=V_Q$.
Then $(T,X)$ is a tree decomposition of $G$.
\end{lemma}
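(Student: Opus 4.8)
The plan is to verify the three defining conditions of a tree decomposition directly, using the classical block--cut tree of $H$ as the underlying tree $T$, and exploiting the fact that $G=H^2$ to handle the edges of $G$ that are not already edges of $H$. Recall that the block--cut tree of a connected graph is indeed a tree, with its vertices being the cut vertices and the blocks, and a cut vertex adjacent to exactly those blocks containing it; I would first note that $T$ as defined is precisely this block--cut tree (so $T$ is a tree), treating each connected component of $H$ separately if $H$ is disconnected, since a tree decomposition of $G$ can be assembled from tree decompositions of its components by linking their block--cut trees arbitrarily.

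First I would check condition (i), that $\bigcup_i X_i = V_G$. This is immediate since $V_G=V_H$ and every vertex of $H$ lies in some block $Q$, whence it belongs to $X_Q=V_Q$. Next, for condition (ii), I must show that every edge $xy\in E_G$ is covered by some bag. Here the work splits into two cases. If $xy\in E_H$, then $x$ and $y$ lie in a common block $Q$ (an edge is contained in a unique block), so $x,y\in X_Q$. If $xy\in E_G\setminus E_H$, then since $G=H^2$ there is a vertex $w$ with $xw,wy\in E_H$, so $x,y\in N_H(w)\subseteq X_w\cup\{w\}$; the point is that if $w$ is a cut vertex then $x,y\in X_w$ by definition, and if $w$ is not a cut vertex then $x,y,w$ all lie in a single block $Q$ and hence in $X_Q$. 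I would phrase this cleanly by observing that any path of length at most $2$ in $H$ either stays inside one block or passes through a cut vertex, covering both subcases.

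The main obstacle, as usual for tree decompositions, is condition (iii): the connectedness of $\{i\mid x\in X_i\}$ in $T$ for each $x\in V_G$. The bags containing a fixed vertex $x$ are of two types: the bags $X_Q$ for blocks $Q$ with $x\in V_Q$, and the bags $X_u$ for cut vertices $u$ with $x\in X_u$, i.e.\ either $u=x$ or $ux\in E_H$. The block nodes containing $x$ together with the cut-vertex node $x$ (if $x$ is a cut vertex) form a connected subtree of the block--cut tree, by the standard structure theorem. The delicate extra part is the inclusion of cut-vertex bags $X_u$ with $ux\in E_H$: such a bag sits at the node $u$ in $T$, and I must argue that this node is adjacent in the subtree to a block node already containing $x$. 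Since $ux\in E_H$, the edge $ux$ lies in some block $Q$, so both $u$ and $x$ belong to $V_Q=X_Q$; thus $Q$ is a block node containing $x$ and incident to the cut-vertex node $u$ in $T$, which attaches $u$ to the already-connected set of $x$-containing nodes. Assembling these observations shows the nodes whose bags contain $x$ induce a connected subtree, completing the verification; I would present this last step carefully since it is the only place where the neighbourhood-augmentation of the cut-vertex bags interacts nontrivially with the tree structure.
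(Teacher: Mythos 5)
Your proposal is correct and follows essentially the same route as the paper's proof: both identify $T$ as the block--cut tree and verify conditions (i)--(iii) directly, covering edges of $G\setminus H$ via a common $H$-neighbour (split on whether it is a cut vertex) and establishing condition (iii) by attaching each cut-vertex bag $X_u$ with $ux\in E_H$ to a block node containing both $u$ and $x$. The only cosmetic differences are that you cite the classical block--cut tree fact (and handle disconnected $H$) where the paper proves acyclicity by a short cycle-contradiction argument, and you organize condition (iii) by augmenting a connected core rather than splitting on whether $x$ is a cut vertex.
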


\begin{proof}
We first prove that $T$ is a tree. For contradiction, suppose that $T$ contains a cycle. Then this cycle is of the form $Q_1u_1\cdots Q_pu_pQ_1$ for some integer $p\geq 2$, 
where $Q_1,\ldots,Q_p$ are blocks of $H$ and $u_1,\ldots,u_p$ are cut vertices of $H$.
By using this cycle we find that $H$ has 
one path from $u_1$ to $u_2$ that is contained in $H[Q_1]$ and one path from $u_1$ to $u_2$ that is contained in $H[Q_2\cup \cdots \cup Q_p]$.
This contradicts our assumption that $u_1$ and $u_2$ are cut vertices of $H$.

We now prove that $(T,X)$ satisfies the three conditions (i)--(iii) of the definition of a tree decomposition.
Condition (i) is satisfied, as every vertex of $H$, and thus every vertex of $G$,  belongs to some block $Q$ of $H$ and thus to some bag $X_Q$.
Condition (ii) is satisfied, as every two 
vertices $x,y$ that are adjacent in $G$ either belong to some common block $Q$ of $H$, and thus belong to $X_Q$, or else have a common neighbour $u$ in $H$ that is a cut vertex of $H$, and thus belong to $X_u$.

In order to prove (iii), consider a vertex $x\in V_G$. 
First suppose that $x$ is a cut vertex of $H$. Then the set of bags to which $x$ belongs consists of bags $X_Q$ for every block $Q$ of $H$ to which $x$ belongs and bags $X_u$ for every neighbour $u$ of $x$ in $H$ that is a cut vertex of $H$. Note that $x$ and any neighbour $u$ of $x$ in $H$ belong to some common block of $H$. Hence, by definition, the corresponding nodes in $T$ form a connected induced subtree of $T$ (which is a star in which every edge is subdivided at most once).
Now suppose that $x$ is not a cut vertex of $H$. Then $x$ is contained in exactly one block $Q$ of $H$.
Hence the set of bags to which $x$ belongs consists of the bags $X_Q$ and bags $X_u$ for every neighbour $u$ of $x$ in $H$ that is a cut vertex of $H$. Note that such a neighbour $u$ belongs to $Q$. Hence, by definition, the corresponding nodes in $T$ form a connected induced subtree of $T$ (which is a star).
This completes the proof of Lemma~\ref{l-first}.\qed
\end{proof}
We call the tree decomposition $(T,X)$ of Lemma~\ref{l-first} the {\it $H$-tree decomposition} of~$G$.
Finally, the fourth lemma shows why we need the previous lemmas.

\begin{lemma}\label{lem:mad-tw}
Let $G$ be a graph with $\mad(G)<\frac{46}{11}$. If $G$ has a square root but no recognizable edges, then $\tw(G)\leq 5$.
\end{lemma}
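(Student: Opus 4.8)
The plan is to produce an explicit tree decomposition of $G$ of width at most $5$ by invoking the $H$-tree decomposition of Lemma~\ref{l-first} for a fixed square root $H$ of $G$, and then to bound the size of every one of its bags. Recall that the bags of that decomposition are the closed neighbourhoods $X_u=\{u\}\cup N_H(u)$ of the cut vertices $u$ of $H$ together with the vertex sets $X_Q=V_Q$ of the blocks $Q$ of $H$. Since the width is the maximum bag size minus one, it suffices to show that every cut vertex of $H$ has degree at most $4$ and that every block of $H$ has at most $6$ vertices.

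First I would control the maximum degree of $H$. For any vertex $v$ of $H$, the set $N_H[v]$ is a clique of $G$, because any two neighbours of $v$ in $H$ are at distance at most $2$ in $H$ and hence adjacent in $G=H^2$. As a clique on $t$ vertices is a subgraph of average degree $t-1$, the hypothesis $\mad(G)<\frac{46}{11}<5$ rules out $K_6\subseteq G$, so $\omega(G)\le 5$ and therefore $d_H(v)\le 4$ for every $v$. In particular every cut-vertex bag $X_u$ has size at most $5$, and every block $Q$ of $H$ satisfies $\Delta(Q)\le 4$.

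It remains to bound the block sizes, which is the heart of the argument. Let $Q$ be a big block of $H$, so $Q$ is $2$-connected with at least three vertices and $\delta(Q)\ge 2$. Every edge of $Q$ is then a non-pendant edge of $H$, and any cycle of $H$ through it is a cycle of $H$ contained in $Q$; by Corollary~\ref{lem:edge-one2} this cycle has length at most $6$. Thus $Q$ is a $2$-connected graph with $\delta(Q)\ge 2$, $\Delta(Q)\le 4$, and every edge on a cycle of length at most $6$. I would assume for contradiction that $|V_Q|\ge 7$ and bound from below the number of edges of $Q^2$, which is a subgraph of $G[V_Q]$ and hence of $G$. Writing $e(Q^2)=e(Q)+p_2(Q)$, where $p_2(Q)$ counts the pairs at distance exactly $2$ in $Q$, I would use $2$-connectivity to get $e(Q)\ge |V_Q|$ and then count the paths of length $2$ through $\sum_w\binom{d_Q(w)}{2}$, invoking $\Delta(Q)\le 4$ and the short-cycle condition (which forces distance-$2$ pairs along every edge) to convert this into a lower bound on $p_2(Q)$. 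The goal is to reach $\ad(Q^2)\ge\frac{46}{11}$, i.e. $e(Q^2)\ge\frac{23}{11}|V_Q|$, contradicting $\mad(G)<\frac{46}{11}$ and forcing $|V_Q|\le 6$.

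With all bags bounded by $6$, the $H$-tree decomposition has width at most $5$, so $\tw(G)\le 5$. I expect the final density estimate to be the main obstacle: translating the combinatorial constraints on a big block ($2$-connectivity, $\Delta(Q)\le 4$, and every edge on a short cycle) into the quantitative edge count $e(Q^2)\ge\frac{23}{11}|V_Q|$ is a careful counting (or discharging) computation, and it is precisely here that the constant $\frac{46}{11}$ enters. The delicate point is the multiple counting of distance-$2$ pairs with several common neighbours and of edges lying inside triangles, which the bound must absorb; keeping $\Delta(Q)\le 4$ is what makes these overcounts controllable.
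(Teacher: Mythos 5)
Your skeleton is exactly the paper's: fix a square root $H$, take the $H$-tree decomposition of Lemma~\ref{l-first}, bound $\Delta(H)\leq 4$ via the clique $N_H[v]$ in $G$ (this part of your argument is complete and correct), and then reduce everything to showing that every block of $H$ has at most six vertices. The problem is that this last step---which is the entire content of the lemma, and the only place the recognizable-edge hypothesis and the constant $\frac{46}{11}$ really do work---is not proved in your proposal: you describe a counting plan and explicitly defer ``the final density estimate'' as ``the main obstacle''. That deferred step is a genuine gap, and the plan as stated falls short.

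Concretely, from the constraints you list on a big block $Q$ ($2$-connected, $\Delta(Q)\leq 4$, every edge on a cycle of length at most~$6$, $|V_Q|\geq 7$), the generic counts you propose give roughly $e(Q)\geq |V_Q|$ plus about one distance-$2$ pair per vertex, i.e.\ $\ad(Q^2)\approx 4$, which is short of $\frac{46}{11}=4+\frac{2}{11}$. To get past~$4$ one must control the maximal degree-$2$ chains of $Q$: the paper shows that $Q$ is not a cycle and that every such chain has length at most~$4$, which yields $\ad(Q)\geq 2+\frac{2}{11}$ and then, via the pointwise bound $d_{Q^2}(u)\geq d_Q(u)+2$, the contradiction $\ad(Q^2)\geq\frac{46}{11}$. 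Ruling out chains of length~$5$ does not follow from your constraint set: a length-$5$ chain whose two branch endpoints are adjacent is compatible with all of them (every edge of the chain then lies on a $6$-cycle), and if such chains are allowed the same counting only gives $p\leq 6q+8r$ (with $p,q,r$ the numbers of vertices of degree $2,3,4$), hence $\ad(Q^2)\geq 4+\frac{1}{7}<\frac{46}{11}$---so your global estimate cannot be closed by this route without an extra idea. The paper kills this configuration by a separate \emph{local} argument: for a length-$5$ chain $P$ with adjacent endpoints it exhibits a small subgraph, $Q[V_P\cup\{x\}]^2$ or $Q[V_P\cup\{x,y\}]^2$, of average degree $\frac{32}{7}$ respectively at least $\frac{36}{8}$, contradicting $\mad(G)<\frac{46}{11}$. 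Note that this uses $\mad$ (the density of a well-chosen $7$- or $8$-vertex subgraph), which is a strictly weaker target than the global bound $e(Q^2)\geq\frac{23}{11}|V_Q|$ you aim for. So to complete your proof you would need (i) to observe that $Q$ is not a cycle (this does follow from your constraints), (ii) the chain-length analysis including the adjacent-endpoint case settled by a local $\mad$ computation, and only then (iii) the global count comparing $\ad(Q)$ with $\ad(Q^2)$---at which point you have reproduced the paper's proof rather than found an alternative to it.
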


\begin{proof}
We assume without loss of generality that $G$ is connected; otherwise we can consider the connected components of $G$ separately. We also assume that $G$ has at least one edge, as otherwise the claim is trivial. 
 Let $H$ be a square root of $G$.
Let $\cal C$ be the set of cut vertices of $H$, and let $\cal B$ be the set of blocks of $H$.
We construct the $H$-tree decomposition $(T,X)$ of $G$ (cf. Lemma~\ref{l-first}).
We will show that $(T,X)$ has width at most 5.

If $v\in V_H$, then $N_H[v]$ is a clique in $G$. Hence,  $\Delta(H)\leq 4$ because
otherwise $\ad(G[N_H[v]])\geq 5$, contradicting our assumption that $\mad(G)<\frac{46}{11}$.
Hence each bag $X_u$ corresponding to a cut vertex $u$ of $H$ has size at most~4. 
We claim that each bag corresponding to  a block of $H$ has size at most~6, that is, we will prove that each block of $H$ has at most 
six vertices. For contradiction, assume that $|V_Q|\geq 7$ for some block $Q$ of $H$.

First assume that $Q$ is a cycle. Then, as $|V_Q|\geq 7$, no edge of $Q$ is included in a cycle of length at most~6. 
Then by Lemma~\ref{lem:edge-one} we find that $Q$ and, consequently, $H$ has a recognizable edge, contradicting our assumption that $H$ does not have such edges. Hence, $Q$ contains at least one vertex that does not have degree~2 in $Q$. As $Q$ is 2-connected and  a 2-connected graph not isomorphic to $K_2$, we find that $Q$ has no pendant vertices. Hence, $Q$ contains at least one vertex of degree at least~3 in $Q$.

We claim that for any vertex $u\in V_Q$, $d_{Q^2}(u)\geq d_{Q}(u)+2$. In order to see this, let $S=\{v\in V_Q\; |\; \dist_Q(u,v)=2\}$.  Because 
$Q$ is connected, $d_Q(v)\leq 4$ 
and
$|V_Q|\geq 7$, we find that $S\neq\emptyset$. 
If $S=\{v\}$ for some $v\in V_Q$, then $v$ is a cut vertex of $Q$, contradicting the 2-connectedness of $Q$. Therefore, $|S|\geq 2$ and thus $d_{Q^2}(u)\geq d_Q(u)+|S|\geq d_Q(u)+2$. 

We also need the following property of $Q$. Let $u,v$ be two distinct vertices of degree at least 3 in $Q$ joined by a path $P$ in $Q$ of length 5 such that all inner vertices of $P$ have  degree 2 in $Q$. We claim that in any such case $u$ and $v$ are not adjacent in $Q$. 
In order to see this, assume that $uv\in E_Q$. Let $x$ and $y$ be neighbours of $u$ and $v$, respectively, that are not in $P$. If $x=y$, then $\ad(Q[V_P\cup\{x\}]^2)=\frac{32}{7}\geq\frac{46}{11}$. If $x\neq y$,   
then $\ad(Q[V_P\cup\{x,y\}]^2)\geq\frac{36}{8}\geq\frac{46}{11}$. In both cases we get a contradiction with our assumption that $\mad(G)<\frac{46}{11}$. Hence, $uv\notin E_Q$. 

We use the property deduced above as follows.
Consider any two distinct vertices $u$ and $v$ of degree at least 3 in $Q$ that are
joined by a path $P$ in $Q$ such that all inner vertices of $P$ have degree 2 in $Q$. Then the length of $P$ is at most 4. Otherwise, because $uv\notin E_Q$, no edge of $P$ is included in a cycle of length at most 6, and then by Lemma~\ref{lem:edge-one} we find that $Q$ and, consequently, $H$ has a recognizable edge, contradicting our assumption that $H$ does not have such edges.

Recall that every vertex in $Q$ has degree between 2 and 4 in $Q$.
We let $p,q$ and $r$ be the numbers of vertices of $Q$ of degree~2,~3 and~4, respectively, in~$Q$. We  construct an auxiliary multigraph $F$ as follows. The vertices of $F$ are the vertices of $Q$ of degree 3 and 4. For any path $P$ in $Q$ between two vertices $u$ and $v$ of $F$ with 
the property that  all inner vertices of $P$ have degree 2 in $Q$, we add an edge $uv$ to $F$. Note that $P$ may have length~1. We also note that $F$ can have multiple edges but no self-loops, because $Q$ is 2-connected. 
Moreover, we observe that $F$ has $q+r$ vertices and $\frac{1}{2}(3q+4r)$ edges. As each path in $Q$ that corresponds to an edge of $F$ has length at most 4, we find that $p\leq \frac{3}{2}(3q+4r)$. Recall that $Q$ has at least one vertex with degree at least~3 in $Q$; hence, $\max\{q,r\}\geq 1$. This means that
$$\ad(Q)=\frac{2p+3q+4r}{p+q+r}=\frac{q+2r}{p+q+r}+2\geq \frac{2q+4r}{11q+14r}+2\geq\frac{2}{11}+2.$$
Because $d_{Q^2}(u)\geq d_Q(u)+2$ for each $u\in V_Q$, the above inequality implies that $\ad(Q^2)\geq\frac{2}{11}+4=\frac{46}{11}$; a contradiction. Hence, $H$ cannot have blocks of size at least~7.\qed
\end{proof}

We are now ready to prove the main result of this section.

\begin{theorem}\label{thm:mad}
 {\sc Square Root} can be solved in time $O(n^4)$ for $n$-vertex  graphs $G$ with $\mad(G)<\frac{46}{11}$.  
\end{theorem}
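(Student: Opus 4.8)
The plan is to combine the edge reduction rule with the treewidth machinery developed in Lemmas~\ref{lem:tw}--\ref{lem:mad-tw}. Given an $n$-vertex input graph $G$ with $\mad(G)<\frac{46}{11}$, I would first observe that since $\mad$ is monotone under taking subgraphs, every subgraph of $G$ also has maximum average degree below $\frac{46}{11}$, so the bound is preserved throughout. The first step is to run \textbf{Edge Reduction} on the instance $(G,\emptyset,\emptyset)$, which by Lemma~\ref{lem:preproc} takes time $O(n^2m^2)$ and either correctly reports \texttt{no} (meaning $G$ has no square root) or returns an equivalent instance $(G',R',B')$ that has no recognizable edges, together with the guarantee that any square root $H$ of $G'$ has all its non-pendant edges on cycles of length at most~$6$. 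Crucially, $G'$ is a subgraph of $G$, so $\mad(G')<\frac{46}{11}$ still holds.

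The key structural step is then to invoke Lemma~\ref{lem:mad-tw}: since $G'$ has no recognizable edges and $\mad(G')<\frac{46}{11}$, if $G'$ has a square root then $\tw(G')\le 5$. I would use this in the contrapositive as a decision gate. Using Lemma~\ref{l-bod} (Bodlaender's algorithm) I can test in linear time whether $\tw(G')\le 5$; if not, then by Lemma~\ref{lem:mad-tw} the graph $G'$ cannot have a square root, so I report \texttt{no}. If the test succeeds, I have a graph of treewidth at most~$5$ in hand, and by Lemma~\ref{lem:tw} I can solve \textsc{Square Root with Labels} on $(G',R',B')$ in time $O(f(5)\,n)=O(n)$. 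By the equivalence in Lemma~\ref{lem:preproc}, the answer for $(G',R',B')$ (with the recognizable edges restored) is exactly the answer for the original \textsc{Square Root} instance on $G$, so I return that answer.

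For the running time I would total the three phases: \textbf{Edge Reduction} costs $O(n^2m^2)$, the treewidth test costs $O(n)$, and the dynamic-programming/Courcelle solve costs $O(n)$. Since $\mad(G)<\frac{46}{11}<5$ forces $m=\frac{1}{2}\ad(G)\,n<\frac{23}{11}n=O(n)$, the dominant term $O(n^2m^2)$ becomes $O(n^4)$, matching the claimed bound.

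I do not expect any genuine obstacle here, because all the heavy lifting has been isolated into the four preceding lemmas; the theorem is essentially their composition. The one point requiring a little care is confirming that the hypothesis of Lemma~\ref{lem:mad-tw} transfers correctly after reduction---namely that $G'$ still satisfies both ``$\mad<\frac{46}{11}$'' (immediate from subgraph monotonicity) and ``no recognizable edges'' (guaranteed by the termination condition of \textbf{Edge Reduction} in Step~1). A secondary bookkeeping point is that \textbf{Edge Reduction} is stated for \textsc{Square Root with Labels}, so I must start from the instance $(G,\emptyset,\emptyset)$ and rely on Lemma~\ref{lem:preproc}'s equivalence to translate the labelled answer back to a plain \textsc{Square Root} answer. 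Both are routine, so the proof is short.
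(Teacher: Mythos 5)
Your proposal is correct and follows essentially the same route as the paper's own proof: both run \textbf{Edge Reduction} on $(G,\emptyset,\emptyset)$, use Lemma~\ref{lem:mad-tw} together with Bodlaender's algorithm (Lemma~\ref{l-bod}) as a treewidth-at-most-5 gate, then finish with Lemma~\ref{lem:tw}, and both obtain $O(n^4)$ from the observation that $\mad(G)<\frac{46}{11}$ forces $m=O(n)$ in the $O(n^2m^2)$ bound of Lemma~\ref{lem:preproc}. Your explicit remark that $\mad$ is monotone under taking subgraphs, so the hypothesis of Lemma~\ref{lem:mad-tw} transfers to the reduced graph $G'$, is a detail the paper leaves implicit.
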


\begin{proof}
Let $G$ be an $n$-vertex graph with $\mad(G)<\frac{46}{11}$.
Our algorithm consists of the following two stages:

\medskip
\noindent
{\it Stage 1.} We construct the instance $(G,R,B)$ of  {\sc Square Root with Labels}  from 
$G$ by setting $R=B=\emptyset$. Then we preprocess $(G,R,B)$ using {\bf Edge Reduction}. By
Lemma~\ref{lem:preproc}, we either solve the problem (and answer {\sc no}) or obtain an equivalent instance $(G',R',B')$ of  {\sc Square Root with Labels} that has no recognizable edges by Lemma~\ref{lem:edge-one}. 
If we get an instance $(G',R',B')$ then we proceed with the second stage.

\medskip
\noindent
{\it Stage 2.} We solve instance $(G',R',B')$ as follows.
By Lemma~\ref{lem:mad-tw}, if $G'$ has a square root, then $\tw(G')\leq 5$. We check the latter property by using Lemma~\ref{l-bod}.
If $\tw(G')\geq 6$, then we stop and return {\tt no}. Otherwise, we solve the problem by Lemma~\ref{lem:tw}.

\medskip
\noindent
By Lemma~\ref{lem:preproc}, stage 1 takes time $O(n^2m^2)$, where $m$ is the number of edges of $G$. Since $\mad(G)<\frac{46}{11}$, stage~1 runs in fact in time $O(n^4)$. As stage 2 takes $O(n)$ time by Lemmas~\ref{lem:tw} and~\ref{l-bod}, the total running time is $O(n^4)$.\qed
\end{proof}

\section{Conclusions}\label{s-con}

We proved a linear kernel for 
{\sc Square Root with Labels}, which generalizes the {\sc Square Root} problem,
for planar$+kv$ graphs using a new edge reduction rule. 
We recall that our edge reduction rule can be applied to solve {\sc Square Root} for graphs of maximum degree at most~6~\cite{CCGKPS}.
To illustrate its wider applicability we gave a third example of our edge reduction rule by showing that it can  
be used to solve {\sc Square Root} in polynomial time for graphs with maximum average degree less than 
$\frac{46}{11}$.  
Whether {\sc Square Root} is polynomial-time solvable for graphs of higher maximum average degree or for graphs of maximum degree at most~7 is still open. In general, it would be interesting to research whether our edge reduction rule can be used to obtain other polynomial-time results for {\sc Square Root}.

\end{document}